\documentclass[letterpaper]{article}
\usepackage{aaai}
\usepackage{times}
\usepackage[scaled]{helvet}
\usepackage{courier}
\setlength{\pdfpagewidth}{8.5in} 
\setlength{\pdfpageheight}{11in}
\usepackage{balance}

\usepackage{amsmath}
\usepackage{amsthm}
\usepackage{amsfonts}
\usepackage{enumerate}
\usepackage{thm-restate}
\usepackage{soul}
\usepackage{multirow}
\usepackage[draft]{ifdraft}
\usepackage{macros}
\usepackage{url}
\urlstyle{rm}

\allowdisplaybreaks

\frenchspacing
\setlength{\pdfpagewidth}{8.5in}
\setlength{\pdfpageheight}{11in}
\pdfinfo{
/Title (Handling owl:sameAs via Rewriting)}

\setcounter{secnumdepth}{1}  
    
\title{Handling \owl{sameAs} via Rewriting\vspace*{-2cm}}
\author{
}

\setlength{\textfloatsep}{5pt plus 1.0pt minus 2.0pt}

\newcommand{\TRurl}{\url{http://tinyurl.com/k9clzk6}}

\begin{document}
\maketitle

\begin{abstract}
Rewriting is widely used to optimise \sameAs reasoning in materialisation based
OWL 2 RL systems. We investigate issues related to both the correctness and
efficiency of rewriting, and present an algorithm that guarantees correctness,
improves efficiency, and can be effectively parallelised. Our evaluation shows
that our approach can reduce reasoning times on practical data sets by orders
of magnitude.
\end{abstract}

\section{Introduction}

RDF \cite{manola2004rdf} and SPARQL \cite{sparql11-overview-w3c} are
increasingly being used to store and access semistructured data. An OWL
ontology \cite{motik2009owl2} is often used to enhance query answers with
tuples implied by the ontology and data, and the OWL 2 RL profile was
specifically designed to allow for tractable rule-based query
answering~\cite{owl2-profiles}. In practice, this often involves using a
forward chaining procedure in which the \emph{materialisation} (i.e., all
consequences) of the ontology and data is computed in a preprocessing step,
allowing queries to be evaluated directly over the materialised triples. This
technique is used by systems such as Owlgres \cite{DBLP:conf/owled/StockerS08},
WebPIE \cite{DBLP:journals/ws/UrbaniKMHB12}, Oracle's RDF store
\cite{DBLP:conf/icde/WuEDCKAS08}, OWLIM SE
\cite{DBLP:journals/semweb/BishopKOPTV11}, and RDFox \cite{MNPHO14a}.

One disadvantage of materialisation is that the preprocessing step can be
costly w.r.t.\ both the computation and storage of entailed triples. This
problem is exacerbated when materialisation requires equality reasoning---that
is, when the \sameAs property is used to state equalities between resources.
OWL 2 RL/RDF \cite[Section 4.3]{owl2-profiles} axiomatises the semantics of
\sameAs using rules such as ${\triple{s'}{p}{o} \leftarrow \triple{s}{p}{o}
\wedge \triple{s}{\sameAs}{s'}}$ that, for each pair of equal resources $r$ and
$r'$, `copy' all triples between $r$ and $r'$. It is well known that such
`copying' can severely impact both the materialisation size and time
\cite{DBLP:conf/semweb/KolovskiWE10}; what is less obvious is that the increase
in computation time due to duplicate derivations may be even more serious (see
Section~\ref{sec:motivation}).

In order to address this problem, materialisation based systems often use some
form of \emph{rewriting}---a well-known technique for theorem proving with
equality \cite{baader98term,NieuwenhuisRubio:HandbookAR:paramodulation:2001}.
In the OWL 2 RL setting, rewriting consists of choosing one representative from
each set of equal resources, and replacing all remaining resources in the set
with the representative. Variants of this idea have been implemented in many of
the above mentioned systems, and they have been shown to be very effective on
practical data sets \cite{DBLP:conf/semweb/KolovskiWE10}.

Although the idea of rewriting is well known, ensuring its correctness (i.e.,
ensuring that the answer to an arbitrary SPARQL query is the same with and
without rewriting) is not straightforward. In this paper we identify two
problems that, we believe, have been commonly overlooked in existing
implementations. First, whenever a resource $r$ is rewritten in the data, $r$
must also be rewritten in the rules; hence, the rule set cannot be assumed to
be fixed during the course of materialisation, which is particularly
problematic if computation is paralellised. Second, it is a common assumption
that SPARQL queries can be efficiently evaluated over the materialisation by
rewriting them, evaluating them over the rewritten triples, and then
`expanding' the answer set (i.e., substituting all representative resources
with equal ones in all possible ways). However, such an approach can be
incorrect when SPARQL queries are evaluated under bag semantics, or when they
contain builtin functions. 

We address both issues in this paper and make the following contributions. In
Section~\ref{sec:motivation} we discuss the problems related to \sameAs in more
detail and show how they can lead to both increased computation costs and
incorrect query answers. In Section~\ref{sec:reasoning} we present an algorithm
that generalises OWL 2 RL materialisation, can also handle SWRL rules
\cite{swrl-w3c}, rewrites rules as well as data triples, and is
\emph{lock-free} \cite{DBLP:books/daglib/0020056}. The latter means that at
least one thread always makes progress, ensuring that the system is less
susceptible to adverse thread scheduling decisions and thus scales better to
many threads. In Section~\ref{sec:querying} we show how to modify SPARQL query
processing so as to guarantee correctness. Finally, in
Section~\ref{sec:evaluation} we present a preliminary evaluation of an
implementation of our algorithms based on the open-source RDFox system. We show
that rewriting can reduce the number of materialised triples by a factor of up
to 7.8, and can reduce materialisation time by a factor of up to 31.1 on a
single thread, with the time saving being largely due to the elimination of
duplicate derivations. Our approach also parallelises computation very well,
providing a speedup of up to 6.7 with eight physical cores, and up to 9.6 with
16 virtual cores.\footnote{In \emph{hyperthreading}, two virtual cores have
their own architectural state, but share the execution resources of one
physical core.} Note, that datalog resoning 
is PTIME complete in the size of the data and is thus deemed to be inherently sequential.

Due to space considerations, in this paper we have only been able to present a
high level description of our algorithms, \ifdraft{but detailed formalisations
and correctness proofs are provided in the appendix, and the implemented system
and all test data sets are available online.\footnote{\TRurl}}{but the paper is
complemented by an online technical report that includes formalisations and
correctness proofs for our algorithms;\footnote{\TRurl} the implemented system
and all test data sets are also available at the same location.}
\section{Preliminaries}\label{sec:preliminaries}

\noindent\textbf{OWL 2 RL and RDF.}\quad
A \emph{term} is a \emph{resource} (i.e., a constant) or a variable. Unless
otherwise stated, $s$, $p$, $o$, and $t$ are terms, and $x$, $y$, and $z$ are
variables. An \emph{atom} is a triple of terms $\triple{s}{p}{o}$ called the
\emph{subject}, \emph{predicate}, and \emph{object}, respectively. A
\emph{fact} (or \emph{triple}) is a variable-free atom. A \emph{rule} $r$ is an
implication of the form \eqref{eq:rule-form}, where ${\head{r} \defeq
\triple{s}{p}{o}}$ is the \emph{head}, ${\body{r} \defeq \triple{s_1}{p_1}{o_1}
\land\ldots\land \triple{s_n}{p_n}{o_n}}$ is the \emph{body}, and each variable
in $\head{r}$ also occurs in $\body{r}$.
\begin{align}
	\label{eq:rule-form} \triple{s}{p}{o} \leftarrow \triple{s_1}{p_1}{o_1} \land \ldots \land \triple{s_n}{p_n}{o_n}
\end{align}
A \emph{program} $P$ is a finite set of rules, and $P^\infty(\efacts)$ is the
\emph{materialisation} of $P$ on a finite set of \emph{explicit} (i.e.,
extensional or EDB) facts $\efacts$ \cite{abiteboul95foundation}.

Two styles of OWL 2 RL reasoning are known, corresponding to the RDF- and
DL-style semantics of OWL. In the RDF style, an ontology is represented using
triples stored with the data in a single RDF graph, and a fixed (i.e.,
independent from the ontology) set of rules is used to axiomatise the RDF-style
semantics \cite[Section 4.3]{owl2-profiles}. While conceptually simple, this
approach is inefficient because the fixed program contains complex joins. In
the DL style, the rules are derived from and depend on the ontology
\cite{GHVD03}, but they are shorter and contain fewer joins. This approach is
complete only if the ontology and the data satisfy conditions from Section 3 of
\cite{motik2009owl2}---an assumption commonly met in practice. Rewriting can be
used with either style of reasoning, but we will use the DL style in our
examples and evaluation because the rules are more readable and their
evaluation tends to be more efficient.
\section{Problems with \sameAs}\label{sec:motivation}

In this section we discuss, by means of an example, the problems that the
\sameAs property poses to materialisation-based reasoners. The semantics of
\sameAs can be captured explicitly using program $\Peq$, consisting of rules
\eqref{eq:eq1}--\eqref{eq:eq5}, which axiomatises \sameAs as a congruence
relation (i.e., an equivalence relation satisfying the replacement property).
We call each set of resources all of which are equal to each other an
\sameAs-\emph{clique}.
\begin{align}
    & \triple{x_i}{\sameAs}{x_i}  \leftarrow \triple{x_1}{x_2}{x_3}\text{, for }1\leq i\leq 3       \tag{$\approx_1$}\label{eq:eq1}\\
    & \triple{x_1'}{x_2}{x_3}     \leftarrow \triple{x_1}{x_2}{x_3}\land\triple{x_1}{\sameAs}{x_1'} \tag{$\approx_2$}\label{eq:eq2}\\
    & \triple{x_1}{x_2'}{x_3}     \leftarrow \triple{x_1}{x_2}{x_3}\land\triple{x_2}{\sameAs}{x_2'} \tag{$\approx_3$}\label{eq:eq3}\\
    & \triple{x_1}{x_2}{x_3'}     \leftarrow \triple{x_1}{x_2}{x_3}\land\triple{x_3}{\sameAs}{x_3'} \tag{$\approx_4$}\label{eq:eq4}\\
    & \false                      \leftarrow \triple{x}{\differentFrom}{x}                          \tag{$\approx_5$}\label{eq:eq5}
\end{align}
OWL 2 RL/RDF \cite[Section 4.3]{owl2-profiles} also makes \sameAs symmetric and
transitive, but those rules are redundant as they are instances of
\eqref{eq:eq2} and \eqref{eq:eq4}.

Rules \eqref{eq:eq1}--\eqref{eq:eq5} can lead to the derivation of many
equivalent triples, as we demonstrate using an example program $\Pex$
containing rules \eqref{eq:R}--\eqref{eq:F3}; these correspond directly to SWRL
rules, but one could equally use slightly more complex rules obtained from OWL
2 RL axioms.
\begin{align}
    & \triple{x}{\sameAs}{\duri{USA}} \leftarrow\triple{\duri{Obama}}{\duri{presidentOf}}{x} \tag{$R$}\label{eq:R}\\
    & \triple{x}{\sameAs}{\duri{Obama}} \leftarrow\triple{x}{\duri{presidentOf}}{\duri{USA}} \tag{$S$}\label{eq:S}\\
    & \triple{\duri{USPresident}}{\duri{presidentOf}}{\duri{US}}                             \tag{$F_1$}\label{eq:F1}\\
    & \triple{\duri{Obama}}{\duri{presidentOf}}{\duri{America}}                              \tag{$F_2$}\label{eq:F2}\\
    & \triple{\duri{Obama}}{\duri{presidentOf}}{\duri{US}}                                   \tag{$F_3$}\label{eq:F3}
\end{align}
On ${\Pex \cup \Peq}$, rule \eqref{eq:R} derives that \duri{USA} is equal to
\duri{US} and \duri{America}, and rules \eqref{eq:eq1}--\eqref{eq:eq4} then
derive an \sameAs triple for each of the nine pairs involving \duri{USA},
\duri{America}, and \duri{US}. The total number of derivations, however, is
much higher: we derive each triple once from rule \eqref{eq:eq1}, three times
from rule \eqref{eq:eq2}, once from rule \eqref{eq:eq3},\footnote{Rule
\eqref{eq:eq1} derives $\triple{\sameAs}{\sameAs}{\sameAs}$, so we can map
variable $x_2$ to \sameAs in rule \eqref{eq:eq3}.} and three times from rule
\eqref{eq:eq4}; thus, we get 66 derivations in total for the nine \sameAs
triples. Analogously, rule \eqref{eq:S} derives that \duri{Obama} and
\duri{USPresident} are equal, and rules \eqref{eq:eq1}--\eqref{eq:eq4} derive
the two \sameAs triples 22 times in total. These \sameAs triples lead to
further inferences; for example, from \eqref{eq:F1}, rules \eqref{eq:eq2} and
\eqref{eq:eq4} infer $2\times 3$ triples with subject \duri{Obama} or
\duri{USPresident}, and object \duri{USA}, \duri{America}, or \duri{US}. Each
of these six triples is inferred three times from rule \eqref{eq:eq2}, once
from rule \eqref{eq:eq3}, and three times from rule \eqref{eq:eq4}, so we get
36 derivations in total.

Thus, for each \sameAs-clique of size $n$, rules \eqref{eq:eq1}--\eqref{eq:eq4}
derive $n^2$ \sameAs triples via $2n^3+n^2+n$ derivations. Moreover, each
triple $\triple{s}{p}{o}$ with terms in \sameAs-cliques of sizes $n_s$, $n_p$,
and $n_o$, respectively, is `expanded' to $n_s\times n_p\times n_o$ triples,
each of which is derived ${n_s+n_p+n_o}$ times. This duplication of facts and
derivations is a major source of inefficiency.

To reduce these numbers, we can choose a representative resource for each
\sameAs-clique and then \emph{rewrite} all triples---that is, replace all
resources with their representatives
\cite{DBLP:conf/owled/StockerS08,DBLP:journals/ws/UrbaniKMHB12,DBLP:conf/semweb/KolovskiWE10,DBLP:journals/semweb/BishopKOPTV11}.
For example, after applying rule \eqref{eq:R}, we can choose \duri{USA} as the
representative of \duri{USA}, \duri{US} and \duri{America}, and, after applying
rule \eqref{eq:S}, we can choose \duri{Obama} as the representative of
\duri{Obama} and \duri{USPresident}. The materialisation of $\Pex$ then
contains only the triple $\triple{\duri{Obama}}{\duri{presidentOf}}{\duri{US}}$
and, as we show in Section~\ref{sec:reasoning}, the number of derivations of
\sameAs triples drops from over 60 to just 6.

Since \sameAs triples can be derived continuously during materialisation,
rewriting cannot be applied as preprocessing; moreover, to ensure that
rewriting does not affect query answers, the resulting materialisation must be
equivalent, modulo rewriting, to $[\Pex \cup \Peq]^\infty(\efacts)$. Thus, we
may need to continuously rewrite both triples and rules: rewriting only triples
can be insufficient. For example, if we choose \duri{US} as the representative
of \duri{USA}, \duri{US} and \duri{America}, then rule \eqref{eq:S} will not be
applicable, and we will fail to derive that \duri{USPresident} is equal to
\duri{Obama}. To the best of our knowledge, no existing system implements rule
rewriting; certainly OWLIM SE and Oracle's RDF store do not,\footnote{Personal
communication.} and so rewriting is \emph{not} guaranteed to preserve query
answers.

Note that the problem is less acute when using a fixed rule set operating on
(the triple encoding of) the ontology and data, but it can still arise if
\sameAs triples involve $\rdf{}$ or $\owl{}$ resources (with a fixed rule set,
these are the only resources occurring in rule bodies).
\section{Parallel Reasoning With Rewriting}\label{sec:reasoning}

The algorithm by \citeA{MNPHO14a} used in the RDFox system implements a
fact-at-a-time version of the semina{\"i}ve algorithm
\cite{abiteboul95foundation}: it initialises the set of facts $\tfacts$ with
the input data $\efacts$, and then computes $P^\infty(\efacts)$ by repeatedly
applying rules from $P$ to $\tfacts$ using $N$ threads until no new facts are
derived. The objective of our approach is to adapt the RDFox algorithm to use
rewriting and thus reduce both the size of $\tfacts$ and the time required to
compute it, while ensuring that an arbitrary SPARQL query can be answered over
the resulting facts as if the query were evaluated directly over $[P \cup
\Peq]^\infty(E)$. To achieve this, we use a mapping $\rho$ that maps resources
to their representatives. For $\alpha$ a fact, a rule, or a set thereof,
$\rho(\alpha)$ is obtained by replacing each resource $r$ in $\alpha$ with
$\rho(r)$; moreover, ${\expand{\tfacts}{\rho} \defeq \{ \triple{s}{p}{o} \mid
\triple{\rho(s)}{\rho(p)}{\rho(o)} \in \tfacts \}}$ is the \emph{expansion} of
$\tfacts$ with $\rho$. To promote concurrency, we update $\rho$ in a lock-free
way, using \emph{compare-and-set} primitives to prevent thread interference.
Moreover, we do not lock $\rho$ when computing $\rho(\alpha)$; instead, we only
require $\rho(\alpha)$ to be at least as current as $\alpha$ just before the
computation. For example, if $\rho$ is the identity as we start computing
${\rho(\triple{a}{b}{a})}$, and another thread makes $a'$ the representative of
$a$, then $\triple{a}{b}{a}$, $\triple{a'}{b}{a}$, $\triple{a}{b}{a'}$, and
$\triple{a'}{b}{a'}$ are all valid results.

We also maintain queues $R$ and $C$ of rewritten rules and resources,
respectively, for which also use lock-free implementations as described by
\citeA{DBLP:books/daglib/0020056}.

To extend the original RDFox algorithm with rewriting, we allow each thread to
perform three different actions. First, a thread can extract a rule $r$ from
the queue $R$ of rewritten rules and apply $r$ to the set of all facts
$\tfacts$, thus ensuring that changes to resources in rules are taken into
account.

Second, a thread can rewrite outdated facts---that is, facts containing a
resource that is not a representative of itself. To avoid iteration over all
facts in $\tfacts$, the thread extracts a resource $c$ from the queue $C$ of
unprocessed outdated resources, and uses indexes by \citeA{MNPHO14a} to
identify each fact ${F \in \tfacts}$ containing $c$. The thread then removes
each such $F$ from $\tfacts$, and it adds $\rho(F)$ to $\tfacts$.

Third, a thread can extract and process an unprocessed fact $F$ in $\tfacts$.
The thread first checks whether $F$ is outdated (i.e., whether ${F \neq
\rho(F)}$); if so, the thread removes $F$ from $\tfacts$ and adds $\rho(F)$ to
$\tfacts$. If $F$ is not outdated but is of the form $\triple{a}{\sameAs}{b}$
with ${a \neq b}$, the thread chooses a representative of the two resources,
updates $\rho$, and adds the other resource to queue $C$. The thread derives a
contradiction if $F$ is of the form $\triple{a}{\differentFrom}{a}$. Otherwise,
the thread processes $F$ by partially instantiating the rules in $P$ containing
a body atom that matches $F$, and applying such rules to $\tfacts$ as described
by \citeA{MNPHO14a}.

\medskip

Rewriting rules is nontrivial: RDFox uses an index to efficiently identify
rules matching a fact, and the index may need updating when $\rho$ changes.
Updating the index in parallel would be very complex, so we perform this
operation serially: when all threads are waiting (i.e., when all facts have
been processed), a single thread updates $P$ to $\rho(P)$, reindexes it, and
inserts the updated rules (if any) into the queue $R$ of rules for
reevaluation. This is obviously a paralellisation bottleneck, but our
experiments have shown that the time used for this process is not significant
when programs are of moderate size.

Parallel modification of $\tfacts$ can also be problematic, as the following
example demonstrates: (1)~thread A extracts a current fact $F$; (2)~thread B
updates $\rho$ and deletes an outdated fact $F'$; and (3)~thread A derives $F'$
from $F$ and writes $F'$ into $\tfacts$, thus undoing the work of thread B.
This could be solved via locking, but at the expense of parallelisation. Thus,
instead of physically removing facts from $\tfacts$, we just mark them as
outdated; then, when matching the body atoms of partially instantiated rules,
we simply skip all marked facts. All this can be done lock-free, and we can
remove all marked facts in a postprocessing step.

\medskip

Theorem~\ref{thm:correctness} states several important properties of our
algorithm that, taken together, ensure the algorithm's correctness; a detailed
formalisation of the algorithm and a proof of the theorem are \ifdraft{given in
the appendix.}{provided in the online technical report.}

\begin{restatable}{theorem}{thmcorrectness}\label{thm:correctness}
    The algorithm terminates for each finite set of facts $\efacts$ and program
    $P$. Let $\rho$ be the final mapping and let $\tfacts$ be the final set of
    unmarked facts.
    \begin{enumerate}
        \item $\triple{a}{\sameAs}{b} \in \tfacts$ implies ${a = b}$---that is,
        $\rho$ captures all equalities.

        \item ${F \in \tfacts}$ implies ${\rho(F) = F}$---that is, $\tfacts$ is
        minimal.

        \item ${\expand{\tfacts}{\rho} = [P \cup
        \Peq]^{\infty}(\efacts)}$---that is, $\tfacts$ and $\rho$ together
        represent $[P \cup \Peq]^{\infty}(\efacts)$.
    \end{enumerate}
\end{restatable}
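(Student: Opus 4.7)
My plan is to prove termination first, then read off Properties~1 and~2 as post-conditions at quiescence, and finally tackle Property~3 via a two-direction induction. Fix a finite universe $U$ of resources — the union of those occurring in $\efacts$ and $P$ — which stays finite because the rules introduce no new constants. Throughout, I would model the algorithm as a transition system whose atomic primitives are compare-and-set on $\rho$, queue push/pop, fact insertion, and marking, and reason about global executions via a linearisation of concurrent actions.

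\textbf{Termination.} The mapping $\rho$ evolves monotonically: each update strictly decreases the number of $\rho$-equivalence classes, which is bounded by $|U|$. Between two successive $\rho$-updates, facts and rules range over finite sets built from $U$, so the queues $R$ and $C$, the unprocessed-fact store, and the outdated-fact pool all inhabit finite state spaces. A lexicographic measure — number of $\rho$-classes, then total queue sizes, then number of unprocessed facts — strictly decreases with every thread action, yielding termination.

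\textbf{Properties 1 and 2.} Both fall out at quiescence from invariants that relate surviving `bad' facts to outstanding work. For~(1), I would maintain that while any unmarked $\triple{a}{\sameAs}{b}$ with $a \neq b$ sits in $\tfacts$, either the fact is still unprocessed or a $\rho$-update is pending with one of $a,b$ queued in~$C$; termination exhausts both options. For~(2), an outdated fact $F$ is always either marked, awaiting rewriting via a resource in~$C$, or still unprocessed, and again quiescence leaves only $\rho$-fixed points.

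\textbf{Property 3 and the main obstacle.} The heart of the proof is showing ${\expand{\tfacts}{\rho} = [P \cup \Peq]^{\infty}(\efacts)}$. Soundness follows by induction on the global order of state transitions: every fact ever written to $\tfacts$ and every equality recorded in $\rho$ is derivable in $[P \cup \Peq]^{\infty}(\efacts)$, with rules~\eqref{eq:eq1}--\eqref{eq:eq4} justifying propagation through representatives. Completeness is the harder direction: by induction on derivation height I would show that for every ${G \in [P \cup \Peq]^{\infty}(\efacts)}$, at termination ${\rho(G) \in \tfacts}$. The inductive step invokes (i)~that every rule is eventually reindexed as $\rho(r)$ during the sequential reindexing phase and requeued on $R$, (ii)~that every current fact is processed against every applicable partially-instantiated rule before being retired, and (iii)~that every $\rho$-change is eventually propagated to $\tfacts$ through queue~$C$. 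The main obstacle is taming the concurrency of these updates: because a thread may read a stale $\rho$ while computing $\rho(\alpha)$ and because $\tfacts$ is modified lock-free, I would need a lemma stating that the weak precondition `$\rho(\alpha)$ is at least as current as $\alpha$ just before the computation' suffices, since any residual staleness creates an outdated fact or rule that the $C$- or $R$-driven loop later repairs. This is precisely why rule reindexing must happen serially at global quiescence: concurrent reindexing could let a stale rule fire against a fact whose representative has since changed, with no later opportunity to re-derive the missed consequence.
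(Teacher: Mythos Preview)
Your overall decomposition matches the paper's: linearise the concurrent execution into a sequence of atomic operations, bound that sequence for termination, read off Properties~1 and~2 from quiescence invariants, and split Property~3 into soundness (induction on operations) and completeness (induction on derivation height). The soundness direction and the invariant arguments for Properties~1 and~2 are essentially what the paper does.

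There is, however, a genuine gap in your termination argument. The lexicographic measure you propose---number of $\rho$-classes, then queue sizes, then number of unprocessed facts---does \emph{not} strictly decrease with every thread action. When a thread matches a rule body and executes $\add{\tfacts}{\head{r}\tau}$ for a fresh fact, the number of unprocessed facts goes \emph{up}, queue sizes are unchanged, and $\rho$ is unchanged, so your measure increases. Observing that the state space is finite between $\rho$-updates is not enough either, since finiteness alone does not rule out cycles. The paper's argument is different and is the one you need: because marking never removes a fact from $\tfacts$ and duplicates are rejected, the number of \emph{successful} $\add$ operations over the whole run is bounded by $|U|^3$; the number of successful merges is bounded by $|U|$; and hence the number of rule-reindexing rounds and additions to $R$ is bounded by $|P|\cdot|U|$. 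Everything else (extractions, dequeues, markings) is then bounded by these counts. The monotonic growth of $\tfacts$ is the load-bearing fact, and your measure does not exploit it.

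For completeness in Property~3 you correctly identify the obstacle, but your proposed repair lemma is still too loose to close the induction. The paper isolates a sharper invariant (call it the visibility lemma): for every step $i$ and every earlier $\add{\tfacts}{F}$, there exists some $\add{\tfacts}{G}$ where $G$ is a (possibly partial, possibly stale) rewriting of $F$ and $G$ has not been marked outdated at any step up to $i$. This is what lets you argue, at the point where the last premise $\rho(F_j)$ of a rule instance is extracted, that \emph{all} the other rewritten premises are already present and unmarked in $\tfacts^{\preceq \rho(F_j)}$, so the semina\"ive matching with the $\prec/\preceq$ annotations actually fires. Your ``residual staleness is eventually repaired'' intuition is the right picture, but without pinning down that a current-enough witness is visible \emph{at every intermediate step}, you cannot guarantee the premises line up at the moment the rule is applied.
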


\subsection{Example}

Table~\ref{tab:run} shows six steps of an application of our algorithm to the
example program $\Pex$ from Section \ref{sec:motivation} on one thread. Some
resource names have been abbreviated for convenience, and $\approx$ abbreviates
\sameAs. The $\triangleright$ symbol identifies the last fact extracted from
$\tfacts$. Facts are numbered for easier referencing, and their (re)derivation
is indicated on the right: $R(n)$ or $S(n)$ means that the fact was obtained
from fact $n$ and rule $R$ or $S$; moreover, we rewrite facts immediately after
merging resources, so $W(n)$ identifies a rewritten version of fact $n$, and
$M(n)$ means that a fact was marked outdated because fact $n$ caused $\rho$ to
change.

We start by extracting facts from $\tfacts$ and, in steps 1 and 2, we apply rule
$R$ to facts 2 and 3 to derive facts 4 and 5, respectively. In step 3, we
extract fact 4, merge \duri{America} into \duri{USA}, mark facts 2 and 4 as
outdated, and add their rewriting, facts 6 and 7, to $\tfacts$. In step 4 we
merge $\duri{USA}$ into $\duri{US}$, after which there are no further facts to
process. Mapping $\rho$, however, has changed, so we update $P$ to contain rules
\eqref{eq:R1} and \eqref{eq:S1}, and add them to the queue $R$.
\begin{align}
    & \triple{x}{\sameAs}{\duri{US}}    \leftarrow \triple{\duri{Obama}}{\duri{presidentOf}}{x} \tag{$R'$}\label{eq:R1}\\
    & \triple{x}{\sameAs}{\duri{Obama}} \leftarrow \triple{x}{\duri{presidentOf}}{\duri{US}}    \tag{$S'$}\label{eq:S1}
\end{align}
In step 5 we evaluate the rules in queue $R$, which introduces facts 9 and 10.
Finally, in step 6, we rewrite \duri{USPresident} into \duri{Obama} and mark
facts 1 and 9 as outdated. At this point the algorithm terminates, making only
six derivations in total, instead of more than 60 derivations when \sameAs is
axiomatised explicitly (see Section \ref{sec:motivation}).

\begin{table*}[tb]
\caption{An Example Run of \refalg{mat} on $\Pex$ and One Thread}\label{tab:run}
\newcommand{\shortSameAs}{{\approx}}
\newcommand{\Fi}{$\triple{\duri{USPres}}{\duri{presOf}}{\duri{US}}$}
\newcommand{\Fii}{$\triple{\duri{Obama}}{\duri{presOf}}{\duri{Am}}$}
\newcommand{\Fiii}{$\triple{\duri{Obama}}{\duri{presOf}}{\duri{US}}$}
\newcommand{\Fiv}{$\triple{\duri{Am}}{\shortSameAs}{\duri{USA}}$}
\newcommand{\Fv}{$\triple{\duri{US}}{\shortSameAs}{\duri{USA}}$}
\newcommand{\Fvi}{$\triple{\duri{Obama}}{\duri{presOf}}{\duri{USA}}$}
\newcommand{\Fvii}{$\triple{\duri{USA}}{\shortSameAs}{\duri{USA}}$}
\newcommand{\Fviii}{$\triple{\duri{US}}{\shortSameAs}{\duri{US}}$}
\newcommand{\Fix}{$\triple{\duri{USPres}}{\shortSameAs}{\duri{Obama}}$}
\newcommand{\Fx}{$\triple{\duri{Obama}}{\shortSameAs}{\duri{Obama}}$}
\newcommand{\tarrow}{$\triangleright$}
\centering\scriptsize{
\begin{tabular}{rl@{\hspace{.1cm}}l||rl@{}l||rll}
\hline
\multicolumn{3}{c}{\textbf{Step 1}}     & \multicolumn{3}{c}{\textbf{Step 2}}   & \multicolumn{3}{c}{\textbf{Step 3}}     \\
\hline
1           & \Fi           &           & 1             & \Fi     &             & 1             & \Fi       &             \\
\tarrow 2   & \Fii          &           & 2             & \Fii    &             & 2             & \st{\Fii} & $M(4)$      \\
3           & \Fiii         &           & \tarrow 3     & \Fiii   &             & 3             & \Fiii     &             \\
\cline{1-3}
4           & \Fiv          & $R(2)$    & 4             & \Fiv    &             & \tarrow 4     & \st{\Fiv} & $M(4)$      \\
                                        \cline{4-6}
            &               &           & 5             & \Fv     & $R(3)$      & 5             & \Fv       &             \\
                                                                                \cline{7-9}
            &               &           &               &         &             & 6             & \Fvi      & $W(2)$      \\
            &               &           &               &         &             & 7             & \Fvii     & $W(4)$      \\
\multicolumn{9}{c}{}                                                                                                      \\
\hline
\multicolumn{3}{c}{\textbf{Step 4}}     & \multicolumn{3}{c}{\textbf{Step 5}}   & \multicolumn{3}{c}{\textbf{Step 6}}     \\
\hline
1           & \Fi           &           & 1             & \Fi     &             & 1             & \st{\Fi}  & $M(9)$      \\
3           & \Fiii         & $W(6)$    & 3             & \Fiii   &             & 3             & \Fiii     &             \\
\tarrow 5   & \st{\Fv}      & $M(5)$    & 8             & \Fviii  & $R'(3)$     & 8             & \Fviii    &             \\
                                        \cline{4-6}
6           & \st{\Fvi}     & $M(5)$    & 9             & \Fix    & $S'(1)$     & \tarrow 9     & \st{\Fix} & $M(9)$      \\ 
7           & \st{\Fvii}    & $M(5)$    & 10            & \Fx     & $S'(3)$     & 10            & \Fx       & $W(9)$      \\
\cline{1-3}                                                                     \cline{7-9}
8           & \Fviii        & $W(5,7)$  &               &         &             &               &           &             \\
\multicolumn{9}{c}{}                                                                                                      \\
\end{tabular}}
\end{table*}
\section{SPARQL Queries on Rewritten Triples}\label{sec:querying}
Given a set of facts $\tfacts$ and mapping $\rho$, the 
expected answers to a SPARQL query $Q$ are those obtained by evaluating 
$Q$ in the expansion $\expand{\tfacts}{\rho}$. Evaluating $Q$ on 
$\expand{\tfacts}{\rho}$, however, forgoes any advantage of smaller 
joins obtained from evaluating $Q$ on the succinct representation 
$\tfacts$.
Thus, the question arises how $Q$ can be evaluated on $\tfacts$ yielding 
the answers in $\expand{\tfacts}{\rho}$ whilst only necessary resources are expanded. To illustrate our strategy, we
use our program $\Pex$ from Section \ref{sec:motivation}: Recall that, after we finish
the materialisation of $\Pex$, we have ${\rho(x) = \duri{US}}$ for each ${x \in
\{ \duri{USA}, \duri{AM}, \duri{US} \}}$ and ${\rho(x) = \duri{Obama}}$ for
each ${x \in \{ \duri{USPresident},\duri{Obama} \}}$.

Firstly, we discuss query evaluation under SPARQL bag semantics where repeated
answers matter. To this end, let
\begin{align*}\scriptstyle
   Q_1 \ := \ \mathsf{SELECT} \; ?x \; \mathsf{WHERE} \; \{ \; ?x \; \duri{presidentOf} \; ?y \; \} 
\end{align*}
On $\expand{T}{\rho}$, query $Q_1$ produces answers ${\mu_1 = \{ ?x \mapsto
\duri{Obama} \}}$ and ${\mu_2 = \{ ?x \mapsto \duri{USPresident} \}}$, each of
which is repeated three times---once for each match of $?y$ to \duri{USA},
\duri{US}, or \duri{America}.
A na\"ive evaluation of the normalised query $\rho(Q_1)$ on $\tfacts$ coupled 
with a post-hoc expansion under $\rho$ produces one occurrence of each
$\mu_1$ and $\mu_2$ which is not the intended result; 
This problem arises because the final expansion step does not take into account
the number of times each binding of $?y$ contributes to the result. We therefore
modify the projection operator to output each projected answer as many times as
there are resources in the projected \sameAs-clique(s). Thus, we answer
$Q_1$ as follows: we match the triple pattern of $\rho(Q_1)$ to $\tfacts$ as
usual, obtaining one answer ${\nu_1 = \{ ?x \mapsto \duri{Obama}, ?y \mapsto
\duri{US} \}}$; then, we project $?y$ from $\nu_1$ and obtain three occurrences
of $\mu_1$ since the \sameAs-clique of $\duri{US}$ is of size three; finally,
we expand each occurrence of $\mu_1$ to $\mu_2$ to obtain all six results.

Secondly, we treat query evaluation in the presence of SPARQL builtin 
functions. Let $Q_2$ be as follows:
\begin{align*}\scriptstyle
    \mathsf{SELECT} \; ?y \; \mathsf{WHERE} \; \{ \; ?x \; \duri{presidentOf} \; \duri{US} \; . \; \mathsf{BIND}(\mathsf{STR}(?x) \; \mathsf{AS} \; ?y) \; \}
\end{align*}
On $\expand{T}{\rho}$, query $Q_2$ produces answers ${\tau_1 = \{ ?y \mapsto
\strlit{Obama} \}}$ and ${\tau_2 = \{ ?y \mapsto \strlit{USPresident} \}}$; in
contrast, on $\tfacts$, query $\rho(Q_2)$ yields only $\tau_1$, which does not
expand into $\tau_2$ because the strings ``Obama'' and ``USPresident'' are not
equal. Our evaluation therefore expands answers \emph{before} evaluating
builtin functions. Thus, we answer $Q_2$ as follows: we match the triple
pattern of $\rho(Q_2)$ to $\tfacts$ as usual, obtaining ${\kappa_1 = \{ ?x
\mapsto \duri{Obama} \}}$; then, we expand $\kappa_1$ to ${\kappa_2 = \{ ?x
\mapsto \duri{USPresident} \}}$; next, we evaluate the $\mathsf{BIND}$
expression and extend $\kappa_1$ and $\kappa_2$ with the respective values for
$?y$; finally, we project $?x$ to obtain $\tau_1$ and $\tau_2$. Since we have
already expanded $?x$, we must not repeat the projected answers further; instead, we output each
projected answer only once to obtain the correct answer cardinalities.
\section{Evaluation}\label{sec:evaluation}

We have implemented our approach as an extension to RDFox, allowing the system
to handle \sameAs via rewriting (REW) or the axiomatisation (AX) from Section
\ref{sec:motivation}. We then compared the performance of materialisation using
these two approaches. In particular, we investigated the scalability of each
approach with the number of threads, and we measured the effect that rewriting
has on the number of derivations and materialised triples.

\medskip

\noindent\textbf{Test Data Sets.}\quad
We used five test data sets, each consisting of an OWL 2 DL ontology and a set
of facts. The data sets were chosen because they contain axioms with the
\sameAs property leading to interesting inferences. Four data sets were derived
from real-world applications.
\begin{itemize}
    \item Claros has been developed in an international collaboration between
    IT experts and archaeology and classical art research institutions with the
    aim of integrating disparate cultural heritage
    databases.\footnote{\url{http://www.clarosnet.org/XDB/ASP/clarosHome/}}

    \item DBpedia is a crowd-sourced community effort to extract structured
    information from Wikipedia and make this information available on the
    Web.\footnote{\url{http://www.dbpedia.org/}}

    \item OpenCyc is an extensive ontology about general human knowledge. It
    contains hundreds of thousands of terms organised in a carefully designed
    ontology and can be used as the basis of a wide variety of intelligent
    applications.\footnote{\url{http://www.cyc.com/platform/opencyc/}}

    \item UniProt is a subset of an extensive knowledge base about protein
    sequences and functional
    information.\footnote{\url{http://www.uniprot.org/}}
\end{itemize}
The ontologies of all data sets other than DBpedia are not in the OWL 2 RL
profile, so we first discarded all axioms outside OWL 2 RL, and then we
translated the remaining axioms into rules as described in \cite{GHVD03}.

Our fifth data set was UOBM \cite{DBLP:conf/esws/MaYQXPL06}---a synthetic data
set that extends the well-known LUBM \cite{DBLP:journals/ws/GuoPH05} benchmark.
We did not use LUBM because neither its ontology nor its data uses the \sameAs
property. The UOBM ontology is also outside OWL 2 RL; however, instead of using
its OWL 2 RL subset, we used its \emph{upper bound}
\cite{DBLP:conf/www/ZhouGHWB13}---an unsound but complete OWL 2 RL
approximation of the original ontology; thus, all answers that can be obtained
from the original ontology can also be obtained from the upper bound, but not
the other way around. Efficient materialisation of the upper bound was critical
for the work by \citeA{DBLP:conf/www/ZhouGHWB13}, and it has proved to be
challenging due to equality reasoning.

The left-hand part of Table \ref{tab:testStats} summarises our test data sets:
column `Rules' shows the total number of rules, column `sA-rules' shows the
number of rules containing the \sameAs property in the head, and column
`Triples before' shows the number of triples before materialisation.

\medskip

\noindent\textbf{Test Setting.}\quad
We conducted our tests on a Dell computer with 128~GB of RAM and two Xeon
E5-2643 processors with a total of 8 physical and 16 virtual cores, running
64-bit Fedora release 20, kernel version 3.13.3-201. We have not conducted warm
and cold start tests separately since, as a main-memory system, the performance
of RDFox should not be affected by the state of the operating system's buffers.
For the AX tests, we extended the relevant program with the seven rules from
Section \ref{sec:motivation}. In all cases we verified that the expansion of
the rewritten triples is identical to the triples derived using the
axiomatisation.

\medskip
\begin{table*}[tb]
\caption{Test Data Sets Before and After Materialisation}\label{tab:testStats}
\centering\scriptsize
\newcommand{\mr}[2][3]{\multirow{#1}{*}{#2}}
\def\mc[#1]#2{\multicolumn{#1}{c|}{#2}}
\vspace{2pt}
\begin{tabular}{|c|c|c|c|c|c|c|c|c|c|c|}
\hline
                & Rules         & sA-           & Triples   & Mode      & \mc[2]{Triples after} & Memory    & Rule      & Derivations   & Merged    \\
                                                                        \cline{6-7}
                &               & rules         & before    &           & unmarked  & total     & (GB)      & appl.     &               & resources \\
\hline
\hline
\mr[3]{Claros}  & \mr{1312}     & \mr{42}       & \mr{19M}  & AX        & \mc[2]{102M}          & 4.5       & 867M      & 11,009M       &           \\
                                                            \cline{5-11}
                &               &               &           & REW       & 79.5M     & 79.7M     & 3.6       & 149M      & 128M          & 12,890    \\
\cline{5-11}
                &               &               &           & factor    &           & 1.28x     & 1.28x     & 5.8x      & 85.5x         &           \\
\hline
\hline
\mr[3]{DBPedia} & \mr{3384}     & \mr{23}       & \mr{113M} & AX        & \mc[2]{139M}          & 6.9       & 934M      & 895M          &           \\
                                                            \cline{5-11}
                &               &               &           & REW       & 136M      & 136M      & 7.0       & 44.5M     & 37M           & 7,430     \\
                                                            \cline{5-11}
                &               &               &           & factor    &           & 1.2x      & 0.99x     & 21.0x     & 24.4x         &           \\
\hline
\hline
\mr[3]{OpenCyc} & \mr{261,067}  & \mr{3,781}    & \mr{2.4M} & AX        & \mc[2]{1,176M}        & 35.9      & 7,832M    & 12,890M       &           \\
                                                            \cline{5-11}
                &               &               &           & REW       & 141M      & 142M      & 4.6       & 309M      & 281M          & 361,386   \\
                                                            \cline{5-11}
                &               &               &           & factor    &           & 7.8x      & 7.8x      & 25.3x     & 45.9x         &           \\
\hline
\hline
\mr[3]{UniProt} & \mr{451}      & \mr{60}       & \mr{123M} & AX        & \mc[2]{228M}          & 15.1      & 1,801M    & 1,555M        &           \\
                                                            \cline{5-11}
                &               &               &           & REW       & 228M      & 228M      & 15.1      & 262M      & 183M          & 5         \\
                                                            \cline{5-11}
                &               &               &           & factor    &           & 1.0x      & 1.0x      & 6.9x      & 8.5x          &           \\
\hline
\hline
\mr[3]{UOBM}    & \mr{279}      & \mr{4}        & \mr{2.2M} & AX        & \mc[2]{36M}           & 1.2       & 332M      & 16,152M       &           \\
                                                            \cline{5-11}
                &               &               &           & REW       & 9.4M      & 9.7M      & 0.4       & 33.8M     & 4,256M        & 686       \\
                                                            \cline{5-11}
                &               &               &           & factor    &           & 3.2x      & 3.2x      & 9.9x      & 3.8x          &           \\
\hline
\end{tabular}
\end{table*}

\noindent\textbf{Effect of Rewriting on Total Work.}\quad
In order to see how rewriting affects the total amount of work, we materialised
each test data set in both AX and REW modes while collecting statistics about
the inference process; the results are shown in the right-hand part of Table
\ref{tab:testStats}. Column `Triples after' shows the number of triples after
materialisation; in the case of REW tests, we additionally show the number of
unmarked triples (i.e., of triples relevant to query answering). Column
`Memory' shows the total memory use as measured by RDFox's internal counters.
Column `Rule appl.' shows the total number of times a rule has been applied to
a triple, and column `Derivations' shows the total number of derivations.
Column `Merged resources' shows the number of resources that were replaced with
representatives in the course of materialisation. Finally, row `factor' shows
the ratio between the respective values in the AX and the REW tests.

As one can see, the reduction in the number of the derived triples is
correlated with the number of rewritten constants: on UniProt there is no
observable reduction since only five resources are merged; however, equalities
proliferate on OpenCyc and so rewriting is particularly effective. In all cases
the numbers of marked triples are negligible, suggesting that our decision to
mark, rather than delete triples does not have unexpected drawbacks. In
contrast, the reduction in the number of rule applications and, in particular,
of derivations is much more pronounced than the reduction in the number of
derived triples.

\medskip

\begin{table*}[tb]
\caption{Materialisation Times with Axiomatisation and Rewriting}\label{tab:testTimes}
\centering\scriptsize
\begin{tabular}{|c||r|r|r|r|r||r|r|r|r|r||r|r|r|r|r||}
\hline
Test&\multicolumn{5}{c||}{Claros} &\multicolumn{5}{|c||}{DBpedia}&\multicolumn{5}{|c||} {OpenCyc}\\
\hline
Threads
&\multicolumn{2}{|c|}{AX}&\multicolumn{2}{|c|}{REW}&\multirow{2}{*}{$\tfrac{\textrm{AX}}{\textrm{REW}}$}
&\multicolumn{2}{|c|}{AX}&\multicolumn{2}{|c|}{REW}&\multirow{2}{*}{$\tfrac{\textrm{AX}}{\textrm{REW}}$}
&\multicolumn{2}{|c|}{AX}&\multicolumn{2}{|c|}{REW}&\multirow{2}{*}{$\tfrac{\textrm{AX}}{\textrm{REW}}$}\\
\cline{2-5} \cline{7-10} \cline{12-15}
& sec & spd& sec&  spd&  & sec& spd& sec& spd&    &   sec& spd& sec & spd&     \\
\hline
1 & 2042.9 &1.0	&65.8& 1.0& 31.1 &219.8&1.0&31.7& 1.0& 6.9&2093.7& 1.0&119.9& 1.0&17.5\\
2 & 969.7  &2.1	&35.2& 1.9& 27.6 &114.6&1.9&17.6& 1.8& 6.5&1326.5& 1.6&	78.3& 1.5&16.9\\
4 & 462.0  &4.4	&18.1& 3.6& 25.5 &66.3&	3.3&10.7& 3.0& 6.2& 692.6& 3.0&	40.5& 3.0&17.1\\
8 & 237.2  &8.6	&9.9 & 6.7& 24.1 &36.1&	6.1& 5.2& 6.0& 6.9& 351.3& 6.0&	23.0& 5.2&15.2\\
12 & 184.9 &11.1&7.9 & 8.3& 23.3 &31.9&	6.9& 4.1& 7.7& 7.7& 291.8& 7.2&	56.2& 2.1& 5.5 \\
16 & 153.4 &13.3&6.9 & 9.6& 22.3 &27.5&	8.0& 3.6& 8.8& 7.7& 254.0& 8.2&	52.3& 2.3& 4.9\\
\hline
\multicolumn{10}{c}{\ }\\[-4pt]
\cline{1-11}
Test&\multicolumn{5}{|c||}{UniProt}&\multicolumn{5}{|c||}{UOBM}\\
\cline{1-11}
Threads
&\multicolumn{2}{|c|}{AX}&\multicolumn{2}{|c|}{REW}&\multirow{2}{*}{$\tfrac{\textrm{AX}}{\textrm{REW}}$}
&\multicolumn{2}{|c|}{AX}&\multicolumn{2}{|c|}{REW}&\multirow{2}{*}{$\tfrac{\textrm{AX}}{\textrm{REW}}$}\\
\cline{2-5} \cline{7-10}
&sec&spd& sec & spd&    & sec  & spd&   sec& spd&    \\
\cline{1-11}
1 &370.6& 1.0&143.4& 1.0& 2.6&2696.7& 1.0&1152.7& 1.0& 2.3\\
2 &232.3& 1.6& 86.7& 1.7& 2.7&1524.6& 1.8& 599.6& 1.9& 2.5\\
4 &129.2& 2.9& 46.5& 3.1& 2.8& 813.3& 3.3& 318.3& 3.6& 2.6\\
8 & 74.7& 5.0& 25.1& 5.7& 3.0& 439.9& 6.1& 177.7& 6.5& 2.5\\
12& 61.0& 6.1& 19.9& 7.2& 3.1& 348.9& 7.7& 152.7& 7.6& 2.3\\
16& 61.9& 6.0& 17.1& 8.4& 3.6& 314.4& 8.6& 137.9& 8.4& 2.3\\
\cline{1-11}
\end{tabular}
\end{table*}

\noindent\textbf{Effect of Rewriting on Materialisation Times.}\quad
In order to see how rewriting affects materialisation times, we measured the
wall-clock times needed to materialise our test data sets in AX and REW modes
on 1, 2, 4, 8, 12, and 16 threads. For each test, we report average wall-clock
time over three runs. Table \ref{tab:testTimes} shows our test results; column
`sec' shows the materialisation time in seconds, column `spd' shows the speedup
over the single-threaded version, and column
`$\tfrac{\textrm{AX}}{\textrm{REW}}$' shows the speedup of REW over AX.

As one can see from the table, RDFox parallelises computation exceptionally
well not only in AX mode but also in the REW mode which uses our extended algorithm. 
When using the eight physical cores of our test
server, the speedup is consistently between six and seven, which suggests that
the lock-free algorithms and data structures of RDFox are very effective. We
believe that the more-than-linear speedup on Claros is due to improved memory
locality resulting in fewer CPU cache misses. The speedup continues to increase
with hyperthreading, but is less pronounced: virtual cores do not provide
additional execution resources, and so they mainly compensate for CPU stalls
due to cache misses. The AX mode seems to scale better with the number of
threads than the REW mode, and we believe this to be due to contention between
threads while accessing the map $\rho$. Yet, the overall saved work compared to
the AX mode, makes more than up for it.
 Only OpenCyc in REW mode did not scale
particularly well: OpenCyc contains many rules, so sequentially updating $P$
and the associated rule index when $\rho$ changes becomes a significant
paralellisation bottleneck. Finally, since the materialisation of Claros with
more than eight threads in REW mode takes less than ten seconds, these results
are difficult to measure and are susceptible to skew.

Our results confirm that rewriting can significantly reduce materialisation
times. RDFox was consistently faster in the REW mode than in the AX mode even
on UniProt, where the reduction in the number of triples is negligible. This is
due to the reduction in the number of derivations, mainly involving rules
(\ref{eq:eq1})--(\ref{eq:eq5}), which is still significant on UniProt. In all
cases, the speedup of rewriting is typically much larger than the reduction in
the number of derived triples (cf.\ Table \ref{tab:testStats}), suggesting that
the primary benefit of rewriting lies in less work needed to match the rules,
rather than, as commonly thought thus far, in reducing the number of derived
triples. This is consistent with the fact that the speedup of rewriting was not
so pronounced on UniProt and UOBM, where the reduction in the number of
derivations was less significant.

Our analysis of the derivations that RDFox makes on UOBM revealed that, due to
the derived \sameAs triples, the materialisation contains large numbers of
resources connected by the \duri{hasSameHomeTownWith} property. This property
is also symmetric and transitive so, for each pair of connected resources, the
number of times each triple is derived by the transitivity rule is quadratic in
the number of connected resources. This leads to a large number of duplicate
derivations that do \emph{not} involve equality. Thus, although it is helpful,
rewriting does not reduce the number of derivation in the same way as, for
example, on Claros, which explains the relatively modest speedup of REW over AX.
\section{Conclusion}\label{sec:conclusion}

In this paper we have investigated issues related to the use of rewriting in
materialisation based OWL 2 RL systems. We have presented algorithms that
resolve these issues, and that can be effectively parallelised, and we have
shown empirically that our approach can reduce reasoning times on practical
data sets by orders of magnitude.

\clearpage



\bibliographystyle{aaai}

\begin{thebibliography}{}

\bibitem[\protect\citeauthoryear{Abiteboul, Hull, and
  Vianu}{1995}]{abiteboul95foundation}
Abiteboul, S.; Hull, R.; and Vianu, V.
\newblock 1995.
\newblock {\em {Foundations of Databases}}.
\newblock Addison Wesley.

\bibitem[\protect\citeauthoryear{Baader and Nipkow}{1998}]{baader98term}
Baader, F., and Nipkow, T.
\newblock 1998.
\newblock {\em {Term Rewriting and All That}}.
\newblock Cambridge University Press.

\bibitem[\protect\citeauthoryear{Bishop \bgroup et al\mbox.\egroup
  }{2011}]{DBLP:journals/semweb/BishopKOPTV11}
Bishop, B.; Kiryakov, A.; Ognyanoff, D.; Peikov, I.; Tashev, Z.; and Velkov, R.
\newblock 2011.
\newblock Owlim: A family of scalable semantic repositories.
\newblock {\em Semantic Web} 2(1):33--42.

\bibitem[\protect\citeauthoryear{Grosof \bgroup et al\mbox.\egroup
  }{2003}]{GHVD03}
Grosof, B.~N.; Horrocks, I.; Volz, R.; and Decker, S.
\newblock 2003.
\newblock {Description Logic Programs: Combining Logic Programs with
  Description Logic}.
\newblock In {\em Proc.\ WWW},  48--57.

\bibitem[\protect\citeauthoryear{Guo, Pan, and
  Heflin}{2005}]{DBLP:journals/ws/GuoPH05}
Guo, Y.; Pan, Z.; and Heflin, J.
\newblock 2005.
\newblock {LUBM: A benchmark for OWL knowledge base systems}.
\newblock {\em Journal of Web Semantics} 3(2--3):158--182.

\bibitem[\protect\citeauthoryear{Harris and
  Seaborne}{2013}]{sparql11-overview-w3c}
Harris, S., and Seaborne, A.
\newblock 2013.
\newblock {SPARQL} 1.1 {O}verview.
\newblock {W3C} {R}ecommendation.
\newblock \url{http://www.w3.org/TR/sparql11-overview/}.

\bibitem[\protect\citeauthoryear{Herlihy and
  Shavit}{2008}]{DBLP:books/daglib/0020056}
Herlihy, M., and Shavit, N.
\newblock 2008.
\newblock {\em {The Art of Multiprocessor Programming}}.
\newblock Morgan Kaufmann.

\bibitem[\protect\citeauthoryear{Horrocks \bgroup et al\mbox.\egroup
  }{2004}]{swrl-w3c}
Horrocks, I.; Patel-Schneider, P.~F.; Boley, H.; Tabet, S.; Grosof, B.; and
  Dean, M.
\newblock 2004.
\newblock {SWRL}: A semantic web rule language combining {OWL} and {RuleML}.
\newblock {W3C} {M}ember {S}ubmission.
\newblock \url{http://www.w3.org/Submission/SWRL/}.

\bibitem[\protect\citeauthoryear{Kolovski, Wu, and
  Eadon}{2010}]{DBLP:conf/semweb/KolovskiWE10}
Kolovski, V.; Wu, Z.; and Eadon, G.
\newblock 2010.
\newblock {Optimizing Enterprise-Scale OWL 2 RL Reasoning in a Relational
  Database System}.
\newblock In {\em Proc.\ ISWC},  436--452.

\bibitem[\protect\citeauthoryear{Ma \bgroup et al\mbox.\egroup
  }{2006}]{DBLP:conf/esws/MaYQXPL06}
Ma, L.; Yang, Y.; Qiu, Z.; Xie, G.~T.; Pan, Y.; and Liu, S.
\newblock 2006.
\newblock {Towards a Complete OWL Ontology Benchmark}.
\newblock In {\em Proc.\ ESWC},  125--139.

\bibitem[\protect\citeauthoryear{Manola and Miller}{2004}]{manola2004rdf}
Manola, F., and Miller, E.
\newblock 2004.
\newblock {RDF} primer.
\newblock {W3C} {R}ecommendation.
\newblock \url{http://www.w3.org/TR/rdf-primer/}.

\bibitem[\protect\citeauthoryear{Motik \bgroup et al\mbox.\egroup
  }{2012}]{owl2-profiles}
Motik, B.; {Cuenca Grau}, B.; Horrocks, I.; Wu, Z.; Fokoue, A.; and Lutz, C.
\newblock 2012.
\newblock {OWL} 2 {W}eb {O}ntology {L}anguage {P}rofiles ({S}econd {E}dition).
\newblock {W3C} {R}ecommendation.
\newblock \url{http://www.w3.org/TR/owl2-profiles/}.

\bibitem[\protect\citeauthoryear{Motik \bgroup et al\mbox.\egroup
  }{2014}]{MNPHO14a}
Motik, B.; Nenov, Y.; Piro, R.; Horrocks, I.; and Olteanu, D.
\newblock 2014.
\newblock Parallel materialisation of {D}atalog programs in centralised,
  main-memory {RDF} systems.
\newblock In {\em Proc.\ of the 28th Nat.\ Conf.\ on Artificial Intelligence
  (AAAI~14)},  129--137.
\newblock AAAI Press.

\bibitem[\protect\citeauthoryear{Motik, Patel-Schneider, and
  Parsia}{2012}]{motik2009owl2}
Motik, B.; Patel-Schneider, P.~F.; and Parsia, B.
\newblock 2012.
\newblock {OWL} 2 {W}eb {O}ntology {L}anguage {S}tructural {S}pecification and
  {F}unctional-style {S}yntax ({S}econd {E}dition).
\newblock {W3C} {R}ecommendation.
\newblock \url{http://www.w3.org/TR/owl2-syntax/}.

\bibitem[\protect\citeauthoryear{Nieuwenhuis and
  Rubio}{2001}]{NieuwenhuisRubio:HandbookAR:paramodulation:2001}
Nieuwenhuis, R., and Rubio, A.
\newblock 2001.
\newblock {Paramodulation-Based Theorem Proving}.
\newblock In Robinson, A., and Voronkov, A., eds., {\em Handbook of Automated
  Reasoning}, volume~I. Elsevier Science.
\newblock chapter~7,  371--443.

\bibitem[\protect\citeauthoryear{Stocker and
  Smith}{2008}]{DBLP:conf/owled/StockerS08}
Stocker, M., and Smith, M.
\newblock 2008.
\newblock {Owlgres: A Scalable OWL Reasoner}.
\newblock In {\em Proc.\ OWLED: Experiences and Directions Workshop},  26--27.

\bibitem[\protect\citeauthoryear{Urbani \bgroup et al\mbox.\egroup
  }{2012}]{DBLP:journals/ws/UrbaniKMHB12}
Urbani, J.; Kotoulas, S.; Maassen, J.; van Harmelen, F.; and Bal, H.~E.
\newblock 2012.
\newblock {WebPIE: A Web-scale Parallel Inference Engine using MapReduce}.
\newblock {\em Journal of Web Semantics} 10:59--75.

\bibitem[\protect\citeauthoryear{Wu \bgroup et al\mbox.\egroup
  }{2008}]{DBLP:conf/icde/WuEDCKAS08}
Wu, Z.; Eadon, G.; Das, S.; Chong, E.~I.; Kolovski, V.; Annamalai, M.; and
  Srinivasan, J.
\newblock 2008.
\newblock {Implementing an Inference Engine for RDFS/OWL Constructs and
  User-Defined Rules in Oracle}.
\newblock In {\em Proc.\ ICDE},  1239--1248.

\bibitem[\protect\citeauthoryear{Zhou \bgroup et al\mbox.\egroup
  }{2013}]{DBLP:conf/www/ZhouGHWB13}
Zhou, Y.; {Cuenca Grau}, B.; Horrocks, I.; Wu, Z.; and Banerjee, J.
\newblock 2013.
\newblock {Making the most of your triple store: query answering in OWL 2 using
  an RL reasoner}.
\newblock In {\em Proc.\ WWW},  1569--1580.

\end{thebibliography}

\ifdraft{
\clearpage
\appendix
\onecolumn

\section{Formalisation}\label{sec:reasoning:formalisation}

A rule $r$ was defined in Section~\ref{sec:preliminaries} as an implication of
the form \eqref{eq:rule-form}, where atom ${\head{r} \defeq \triple{s}{p}{o}}$
is the \emph{head} of $r$, conjunction ${\body{r} \defeq \triple{s_1}{p_1}{o_1}
\land\ldots\land \triple{s_n}{p_n}{o_n}}$ is the \emph{body} of $r$, and each
variable in $\head{r}$ also occurs in $\body{r}$. A \emph{program} $P$ is a
finite set rules. We also use the standard notions of a \emph{substitution}
$\sigma$ and composition $\sigma\tau$ of substitutions $\sigma$ and $\tau$; and
$\alpha\sigma$ is the result of applying $\sigma$ to a term, formula, or
program $\alpha$. Let $S$ be a finite set of facts. For $r$ a rule of the form
\eqref{eq:rule-form}, $r(S)$ is the smallest set such that ${H\sigma \in r(S)}$
for each substitution $\sigma$ satisfying ${B_i\sigma \in S}$ for each $i$ with
${1 \leq i \leq n}$; moreover, for $P$ a program, let ${P(S) \defeq \bigcup_{r
\in P} r(S)}$. Given a finite a set of \emph{explicit} (i.e., extensional or
EDB) facts $\efacts$, the \emph{materialisation} $P^\infty(\efacts)$ of $P$ on
$E$ is defined as follows: let ${P^0(\efacts) \defeq \efacts}$; let
${P^i(\efacts) \defeq P^{i-1}(\efacts) \cup P(P^{i-1}(\efacts))}$ for each ${i
> 0}$; and let ${P^\infty(\efacts) \defeq \bigcup_i P^i(\efacts)}$.

\subsection{Parallel Materialisation in RDFox}

For convenience, we will briefly recall some details of the RDFox algorithm
presented in \cite{MNPHO14a}. The RDFox algorithm computes $P^\infty(\efacts)$
using $N$ threads of a set of explicit facts $\efacts$ and a program $P$. Set
$\efacts$ is first copied into the set of \emph{all} facts $\tfacts$, after
which each thread starts updating $\tfacts$ using a fact-at-a-time version of
the semina{\"i}ve algorithm \cite{abiteboul95foundation}. In particular, a
thread selects an unprocessed fact $F$ from $\tfacts$ and tries to match it to
each body atom $B_i$ of each rule of the form \eqref{eq:rule-form} in $P$. For
each substitution $\sigma$ with ${F = B_i\sigma}$, the thread evaluates the
partially instantiated rule ${H\sigma \leftarrow B_1\sigma \land \dots \land
B_{i-1}\sigma \land B_{i+1}\sigma \land \dots \land B_k\sigma}$ by matching the
rule's body as a query in $\tfacts$, and adding $H\tau$ to $\tfacts$ for each
thus obtained substitution $\tau$ with $\sigma \subseteq \tau$. The thread
repeats these steps until all facts in $\tfacts$ have been processed.
Materialisation finishes if at this point all other threads are waiting;
otherwise, the thread waits for more facts to become available.

To implement this idea efficiently, RDFox stores all facts in $\tfacts$ in a
single table. As usual, resources are encoded using nonzero integer resource
IDs in a way that allows IDs to be used as array indexes. Furthermore, RDFox
maintains three array-based and three hash-based indexes that allow it to
efficiently identify all relevant facts in $\tfacts$ when matching a given
atom. Such a scheme has two important advantages. First, the indexes allow
queries (i.e., rule bodies) to be evaluated using nested index loop joins with
sideways information passing. Second, arrays and hash tables are naturally
parallel data structures and so they support efficient concurrent updates as
parallel threads derive fresh facts.

\subsection{Formalising the Rewriting Algorithm}

As discussed in Section~\ref{sec:reasoning}, we extend the RDFox algorithm with
rewriting to reduce both the size of $\tfacts$ and the time required to compute
it, while ensuring that an arbitrary SPARQL query can be answered exactly as if
it were evaluated over $[P \cup \Peq]^\infty(E)$.

We use \emph{short-circuit evaluation} of expressions: in `$A$ and $B$' (resp.\
`$A$ or $B$'), $B$ is evaluated only if $A$ evaluates to true (resp.\ false).
We store all facts in a data structure $\tfacts$ that must provide several
abstract operations: $\add{\tfacts}{F}$ atomically adds a fact $F$ to $\tfacts$
if $F$ is not already present in $\tfacts$ (marked or not), returning $\true$
if $\tfacts$ has been changed; and $\markout{\tfacts}{F}$ atomically marks a
fact ${F \in \tfacts}$ as outdated, returning $\true$ if $F$ has been changed.
Also, $\tfacts$ must provide an iterator over its facts: $\nnext{\tfacts}$
atomically selects and returns a fact or returns $\varepsilon$ if no such facts
exists; $\hasNext{\tfacts}$ returns $\true$ if $\tfacts$ contains such a fact;
and $\last{\tfacts}$ returns the last returned fact. These operations need not
enjoy the ACID properties, but they must be \emph{linearisable}
\cite{DBLP:books/daglib/0020056}: each asynchronous sequence of calls should
appear to happen in a sequential order, with the effect of each call taking
place at an instant between the call's invocation and response. Access to
$\tfacts$ thus does not require synchronisation via locks. Given a fact $F$
returned by $\nnext{\tfacts}$, let $\tfacts^{\prec F}$ be the facts returned by
$\nnext{\tfacts}$ before $F$, and let ${\tfacts^{\preceq F} \defeq
\tfacts^{\prec F}\cup\{ F \}}$.

For $\rho$ the mapping of resources to their representatives,
$\mergeInto{\rho}{d}{c}$ atomically checks whether $d$ is a representative of
itself; if so, it updates the representative of all resources that $d$
represents to the representative of $c$ and returns $\true$. We discuss how to
implement this operation and how to compute $\rho(\alpha)$ in the following
subsection. Moreover, $\rho(\tfacts)$ is the \emph{rewriting} of $\tfacts$ with
$\rho$, and ${\expand{\tfacts}{\rho} \defeq \{ \triple{s}{p}{o} \mid
\triple{\rho(s)}{\rho(p)}{\rho(o)} \in \tfacts \}}$ is the \emph{expansion} of
$\tfacts$ with $\rho$.

An \emph{annotated query} is a conjunction of atoms of the form
${A_1^{\bowtie_1} \land \dots \land A_k^{\bowtie_k}}$, where ${\bowtie_i \in \{
\prec, \preceq \}}$ for each ${1 \leq i \leq k}$. For $F$ a fact and $\sigma$ a
substitution, operation ${\tfacts.\evaluate(Q, F, \sigma)}$ returns the set
containing each minimal substitution $\tau$ such that ${\sigma \subseteq \tau}$
and $\tfacts^{\bowtie_i F}$ contains an unmarked fact $A_i\tau$ for each ${1
\leq i \leq k}$. Given a conjunction of atoms ${Q = B_1 \land \ldots \land
B_n}$, let ${Q^\preceq \defeq B_1^\preceq \land \ldots \land B_n^\preceq}$.

Finally, for $P'$ a set of rules and $F$ a fact, $P'.\rulesFor(F)$ returns each
tuple of the form ${\langle r, Q_i, \sigma\rangle}$ where ${r \in P'}$ is a
rule of the form \eqref{eq:rule-form}, $\sigma$ is a substitution such that ${F
= B_i\sigma}$, and ${Q_i = B_1^\prec\land\dots B_{i-1}^\prec\land
B_{i+1}^\preceq\land\dots\land B_k^\preceq}$.

To implement $\markout{\tfacts}{F}$, we associate with each fact a status bit,
which we update lock-free using compare-and-set operations
\cite{DBLP:books/daglib/0020056}; efficient implementation of all other
operations was described by \citeA{MNPHO14a}.

We use a queue $C$ of resources: $\enqueue{C}{c}$ atomically inserts a resource
$c$ into $C$; and $\dequeue{C}$ atomically selects and removes a resource from
$c$, or returns $\varepsilon$ if no such resource exists. We also use a queue
$R$ of rules. Lock-free implementation of these operations is described by
\citeA{DBLP:books/daglib/0020056}.
 
In addition to $\efacts$, $\tfacts$, $P$, $\rho$, $C$, and $R$, we use several
global variables: $N$ is the number of threads (constant); $W$ is the number of
waiting threads (initially 0); $P'$ is the `current' program (initially $P$);
$run$ is a Boolean flag determining whether materialisation should continue
(initially $\true$); $L$ is the last fact returned by $\nnext{\tfacts}$ before
$P'$ is updated (initially undefined); and $m$ is a mutex variable.

After initialising $\tfacts$ to $\efacts$, each of the $N$ threads executes
\refalg{mat}, trying in line \ref{alg:mat:if} to evaluate a rule whose
resources have been updated, rewrite facts containing an outdated resource, or
apply rules to a fact from $\tfacts$. When no work is available, the thread
enters a critical section (lines \ref{alg:mat:aM}--\ref{alg:mat:rM}) and waits
for more work or a termination signal (line \ref{alg:mat:more-work}). Variable
$W$ is incremented (line \ref{alg:mat:incW}) before entering, and decremented
(line \ref{alg:mat:decW}) after leaving the critical section, so at any point
in time $W$ is the number of threads inside the critical section. The thread
goes to sleep (line \ref{alg:mat:wait}) if no more work is available but other
threads are running. When the last thread runs out of work (line
\ref{alg:mat:last-enters}), it adds to $R$ an updated version of each outdated
rule in $P'$ (line \ref{alg:mat:update-rules}), notes the last fact in
$\tfacts$ (line \ref{alg:mat:note-last}), updates $P'$ (line
\ref{alg:mat:update-P}), signals termination if there are no rules to
reevaluate (line \ref{alg:mat:update-run}), and wakes up all waiting threads
(line \ref{alg:mat:notify}). Updating $P$ on a single thread simplifies the
implementation, but it introduces a potential sequential bottleneck; however,
our experiments have shown that, when $P$ is not too large, the amount of
sequential processing in lines \ref{alg:mat:update-rules}--\ref{alg:mat:notify}
does not significantly affect our approach.

\begin{algorithm}[t]
\caption{$\mathsf{materialise}$}\label{alg:mat}
\begin{algorithmic}[1]
    \item[\textbf{Global:}]
    \item[]
        \begin{tabular}{@{}r@{}l@{\,\;\;}l@{}l@{}}
            $N$               & : No. of threads                        & $W$       & : No. of waiting threads (0) \\
            $P$               & : a program                             & $P'$      & : the current program ($P$) \\
            $\efacts$         & : explicit facts                        & $\tfacts$ & : all facts ($\efacts$) \\
            $m$               & : a mutex variable                      & $L$       & : reevaluation limit (NaN) \\
            $\mathit{run}$    & : a Boolean flag ($\true$)              & $\rho$    & : resource mapping ($id$) \\
            $R$               & : a queue of rules ($\emptyset$)        & $C$       & : a queue of constants ($\emptyset$) \\[0.05in]
        \end{tabular}

    \While{$\mathit{run}$}
        \If{$\lnot \mathsf{evaluateUpdatedRules}()$ and $\lnot \mathsf{rewriteFacts}()$ and $\lnot \mathsf{applyRules}()$}      \label{alg:mat:if}
            \State increment $W$ atomically                                                                                     \label{alg:mat:incW}
            \MutexAcquire{\emph{m}}                                                                                             \label{alg:mat:aM}
            \While{$\isEmpty{R}\land\isEmpty{C}\land\lnot \hasNext{\tfacts}\land\mathit{run}$}                                  \label{alg:mat:more-work}
                \If{$W = N$}                                                                                                    \label{alg:mat:last-enters}
                    \State $R:=\{\rho(r)\mid r \in P' \text{ and } \rho(r) \not\in P' \}$                                              \label{alg:mat:update-rules}
                    \State $L \defeq \last{\tfacts}$                                                                            \label{alg:mat:note-last}
                    \State $P' \defeq \rho(P)$                                                                                  \label{alg:mat:update-P}
                    \State $\mathit{run} \defeq \isNotEmpty{R}$                                                                 \label{alg:mat:update-run}
                	\State \notifyThreads                                                                           \label{alg:mat:notify}
                \Else
                    \State \textbf{release} \emph{m}, wait for notification, \textbf{acquire} \emph{m}                          \label{alg:mat:wait}
                \EndIf
            \EndWhile
            \State decrement $W$ atomically                                                                                     \label{alg:mat:decW}
            \MutexRelease{\emph{m}}                                                                                             \label{alg:mat:rM}
        \EndIf
    \EndWhile
\end{algorithmic}
\end{algorithm}

\begin{figure*}[t] 
\begin{minipage}[t]{0.44\textwidth}
\begin{algorithm}[H]
\caption{$\mathsf{evaluateUpdatedRules}$}\label{alg:rule}
\begin{algorithmic}[1] 
    \State $r \defeq \dequeue{R}$
    \If{$r \neq \varepsilon$}
        \For{each $\tau \in \tfacts.\evaluate(\body{r}^\preceq,L,\emptyset)$}                                                   \label{alg:rule:for}
            \State \textbf{if} $\add{\tfacts}{\head{r}\tau}$ \textbf{then} \notifyThreads                           \label{alg:rule:add}
        \EndFor
    \EndIf
    \State \textbf{return} $r \neq \varepsilon$    
\end{algorithmic}
\end{algorithm}
\par
\begin{algorithm}[H]
\caption{$\mathsf{rewriteFacts}$}\label{alg:const}
\begin{algorithmic}[1]
    \State $c \defeq \dequeue{C}$
    \If{$c \neq \varepsilon$}
        \For{each unmarked fact $F \in \tfacts$ containing $c$}                                                              \label{alg:const:for}
			\State \textbf{if} $\markout{\tfacts}{F}$ and $\add{\tfacts}{\rho(F)}$ \textbf{then} \notifyThreads     \label{alg:const:rewrite}
        \EndFor
    \EndIf
    \State \textbf{return} $c \neq \varepsilon$                                                                                 \label{alg:const:return}
\end{algorithmic}
\end{algorithm}
\end{minipage}
\qquad
\begin{minipage}[t]{0.5\textwidth}
\begin{algorithm}[H]
\caption{$\mathsf{applyRules}$}\label{alg:fact}
\begin{algorithmic}[1]    
    \State $F \defeq \nnext{\tfacts}$                                                                                           \label{alg:fact:extract}
    \If{$F\neq \varepsilon$ and $F$ is not marked as outdated}
		\State $G \defeq \rho(F)$
        \If{$F \neq G$}                                                                                                         \label{alg:fact:if-outdated}
            \State \textbf{if} $\markout{\tfacts}{F}$ and $\add{\tfacts}{G}$ \textbf{then} \notifyThreads                       \label{alg:fact:rewrite}
        \ElsIf{$F$ is of the form $\triple{a}{\owl{sameAs}}{b}$}                                                                \label{alg:fact:sameas-start}
            \If{$a$ and $b$ are distinct}                                                                                       \label{alg:fact:ssameas}
                \State $c\defeq \min\{ a,b \}$; \quad $d \defeq \max\{ a,b \}$
                \If{$\mergeInto{\rho}{d}{c}$}
                    \State $\enqueue{C}{d}$ and \notifyThreads                                                                  \label{alg:fact:mergeconst}
                \EndIf
            \EndIf                                                                                                              \label{alg:fact:sameas-end}
        \ElsIf{$F$ is of the form $\triple{a}{\owl{differentFrom}}{a}$}                                                         \label{alg:fact:differentFrom}
            \State derive a contradiction and \notifyThreads                                                                    \label{alg:fact:contradiction}
        \Else
            \For{each $\langle r, Q, \sigma \rangle \in P'.\rulesFor(F)$}                                                       \label{alg:fact:ruleindex-start}
                \For{each $\tau \in \tfacts.\evaluate(Q, F, \sigma)$}
                    \State \textbf{if} $\add{\tfacts}{\head{r}\tau}$ \textbf{then} \notifyThreads                               \label{alg:fact:addinferred}
                \EndFor
			\EndFor                                                                                                             \label{alg:fact:ruleindex-end}
            \For{each resource $c$ occurring in $F$}                                                                            \label{alg:fact:ref-sameas-start}
            	\State \textbf{if} $\add{\tfacts}{\triple{c}{\owl{sameAs}}{c}}$ \textbf{then} \notifyThreads                    \label{alg:fact:ref-sameas-add}
            \EndFor                                                                                                             \label{alg:fact:ref-sameas-end}
        \EndIf
    \EndIf
    \State \textbf{return} $F \neq \varepsilon$
\end{algorithmic}
\end{algorithm}
\end{minipage}
\end{figure*}

Algorithm \ref{alg:rule} processes the updated rules in $R$ by evaluating their
bodies in $\tfacts^{\preceq L}$ and instantiating the rule heads. Algorithm
\ref{alg:const} rewrites all facts in $\tfacts$ that contain an outdated
resource $c$. Algorithm \ref{alg:fact} extracts from $\tfacts$ (line
\ref{alg:fact:extract}) an unprocessed, unmarked fact $F$ and processes it.
Fact $F$ is rewritten if it is outdated (lines
\ref{alg:fact:if-outdated}--\ref{alg:fact:rewrite}); this is needed because a
thread can derive a fact containing an outdated resource \emph{after} that
resource has been processed by Algorithm \ref{alg:const}. If $F$ is an \sameAs
triple with distinct resources (lines
\ref{alg:fact:sameas-start}--\ref{alg:fact:ssameas}), then the smaller resource
(according to an arbitrary total order) is selected as the representative for
the other one, and the latter is added to the queue $C$ of outdated resources
(line \ref{alg:fact:mergeconst}). An ordering on resources is needed to prevent
cyclic merges and to ensure uniqueness of the algorithm's result. The thread
derives a contradiction if $F$ is an \differentFrom triple with the same
resource (lines \ref{alg:fact:differentFrom}--\ref{alg:fact:contradiction}).
Otherwise, the thread applies the rules to $F$ (lines
\ref{alg:fact:ruleindex-start}--\ref{alg:fact:ruleindex-end}) and derives the
reflexive \sameAs triples (lines
\ref{alg:fact:ref-sameas-start}--\ref{alg:fact:ref-sameas-end}).

Theorem~\ref{thm:correctness} presented in Section \ref{sec:reasoning} captures
properties that ensure correctness of our algorithm; we restate the theorem and
present a detail proof in Appendix \ref{sec:appendix}.

\subsection{Implementing the Map of Representatives}\label{sec:reasoning:map}

Mapping $\rho$ consists of two arrays, $\msucc$ and $\mpred$, indexed by
resource IDs and initialised with zeros. Let $c$ be a resource. Then,
$\msucc[c]$ is zero if $c$ represents itself, or $\msucc[c]$ contains a
resource that $c$ has been merged into. To retrieve resources equal to a
representative, we organise each \sameAs-clique into a linked list of
resources, so $\mpred[c]$ contains the next pointer.

\refalg{merge} merges $d$ into $c$ in a lock-free way. We update $\msucc[d]$ to
$c$ if $d$ currently represents itself (line \ref{alg:merge:cas}). The
\emph{compare-and-set} primitive prevents thread interference:
$\cas{loc}{exp}{new}$ atomically loads the value stored at location $loc$ into
a temporary variable $old$, stores $new$ into $loc$ if ${old = exp}$, and
returns $old$. If this update is successful, we append the clique of $d$ to the
clique of $c$ (lines \ref{alg:merge:append-init}--\ref{alg:merge:append-end}):
we move to the end of $c$'s list (short-circuit evaluation ensures that CAS in
line \ref{alg:merge:append-start} is evaluated only if $\mpred[e] = 0$) and try
to change $\mpred[e]$ to $d$; if the latter fails due to concurrent updates, we
continue scanning $c$'s list.

\refalg{applyrho} computes $\rho(c)$ by traversing $\msucc$ until it reaches a
non-merged resource $r$. If another thread updates $\rho$ by modifying
$\msucc[r]$, we just continue scanning $\msucc$ past $r$, so the result is at
least as current as $\rho$ just before the start.

\begin{figure*}[t] 
\begin{minipage}[t]{0.70\textwidth}
\begin{algorithm}[H]
\caption{$\rho.\mathsf{mergeInto}(d,c)$}\label{alg:merge}
\begin{algorithmic}[1]
    \If{$\cas{\msucc[d]}{0}{c} = 0$}                                    \label{alg:merge:cas}
        \State $e \defeq c$                                             \label{alg:merge:append-init}
        \While{$\mpred[e] \neq 0$ or $\cas{\mpred[e]}{0}{d} \neq 0$}    \label{alg:merge:append-start}
            \State $e \defeq \mpred[e]$
        \EndWhile                                                       \label{alg:merge:append-end}
        \State \textbf{return} $\true$                                  \label{alg:merge:success}
    \Else
        \State \textbf{return} $\false$                                 \label{alg:merge:failure}
    \EndIf
\end{algorithmic}
\end{algorithm}
\end{minipage}
\qquad
\begin{minipage}[t]{0.25\textwidth}
\begin{algorithm}[H]
\caption{$\rho(c)$}\label{alg:applyrho}
\begin{algorithmic}[1]
    \State $r \defeq c$
    \Loop
        \State $r' \defeq \msucc[r]$
        \If{$r' = 0$}
            \State \textbf{return} $r$
        \Else
            \State $r \defeq r'$
        \EndIf
    \EndLoop
\end{algorithmic}
\end{algorithm}
\end{minipage}
\end{figure*}

\section{Proof of Theorem \ref{thm:correctness}}\label{sec:appendix}

\thmcorrectness*

For notational convenience, let ${\Pi^i \defeq [P \cup \Peq]^i(\efacts)}$ and
let ${\Pi^\infty \defeq [P \cup \Peq]^{\infty}(\efacts)}$. Furthermore, let
$N_r$ be the number of distinct resources occurring in $\efacts$, and let $|P|$
be the number of rules in $P$. We split our proof into several claims.

\begin{claim}
    The algorithm terminates.
\end{claim}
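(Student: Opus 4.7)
The plan is to argue termination by exhibiting a finite upper bound on the total number of productive actions any thread can perform. Since facts are only added or marked (never removed from $\tfacts$), and $\add{\tfacts}{F}$ is a no-op when $F$ is already present, I will first bound the size of the universe of possible facts, then bound each source of work in terms of this universe.

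First I would observe that no new resources are ever introduced: every resource that can appear in $\tfacts$ comes either from $\efacts$, from the rules in $P$, or is a representative computed by $\rho$, which only picks among already-seen resources. Hence the set $\mathcal{R}$ of resources reachable during the run is finite, bounded by $N_r$ plus the constants occurring in $P$. Consequently the set $\mathcal{F}$ of possible facts over $\mathcal{R}$ is finite, $\tfacts \subseteq \mathcal{F}$ is bounded, and similarly the set of possible rewritten programs $\{\rho(P) \mid \rho\}$ is finite.

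Next I would bound each productive step. A call to $\mergeInto{\rho}{d}{c}$ returns $\true$ only when $d$ was still a self-representative; each successful merge strictly decreases the number of self-representing resources in $\mathcal{R}$, so at most $|\mathcal{R}|-1$ successful merges occur. Each such merge enqueues at most one resource into $C$, so the total number of $\dequeue{C}$ calls that return a non-$\varepsilon$ value is bounded. A successful rewriting step in Algorithm~\ref{alg:const} either marks a previously unmarked fact or adds a previously absent fact, and both can happen only finitely often since $\mathcal{F}$ is finite. For Algorithm~\ref{alg:fact}, observe that $\nnext{\tfacts}$ iterates through the facts of $\tfacts$ and, since $|\tfacts|$ is bounded by $|\mathcal{F}|$, the iterator can return a non-$\varepsilon$ value only finitely many times; each such call triggers finitely many rule applications, each of which can add to $\tfacts$ or to $C$ only finitely often. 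Finally, Algorithm~\ref{alg:rule} only does work when $R$ is non-empty; $R$ is populated only inside the critical section in line~\ref{alg:mat:update-rules}, and this happens only when $\rho(P) \neq P'$, i.e.\ when some merge has occurred since the last update. Since merges are finite, the number of times $R$ is replenished is finite, and each replenishment adds at most $|P|$ rules.

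The subtle point I expect to be the main obstacle is coordinating these bounds around the central loop of \refalg{mat}: a thread can spin through line~\ref{alg:mat:if} making no progress while waiting for others, and I need to rule out an infinite idle loop. The key observation is that, by the monitor discipline in lines~\ref{alg:mat:aM}--\ref{alg:mat:rM}, once $W = N$ the last thread either finds new work to schedule (in which case $R$ grows by a bounded-in-$|P|$ amount due to a prior merge) or sets $\mathit{run} := \false$ and notifies all waiters; a thread whose \textbf{while} guard fails cannot re-enter without $\mathit{run}$ becoming $\true$, which never happens after being set to $\false$. Combining this with the finite bounds on merges, queue insertions, and fact iterations gives a finite bound on the total work, and hence termination.
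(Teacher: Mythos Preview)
Your proposal is correct and follows essentially the same approach as the paper: bound the number of successful additions to $\tfacts$ by the finite universe of facts, bound successful merges (and hence insertions into $C$) by the number of resources, and bound insertions into $R$ by the number of merges times $|P|$. The paper's proof is terser and simply asserts that these finite bounds ``clearly imply'' termination, whereas you additionally spell out the monitor discipline that rules out infinite idling in the critical section and are slightly more careful to include constants from $P$ in the resource universe; these are refinements rather than a different strategy.
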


\begin{proof}    
Duplicate facts are eliminated eagerly, and facts are never deleted, so the
number of successful additions to $\tfacts$ is bounded by $N_r^3$. Moreover,
\refalg{merge} ensures that each resource is merged at most once; hence, $\rho$
can change at most $N_r$ times, and the number of additions to queue $C$ is
bounded by $N_r$ as well. Thus, $P' \neq \rho(P')$ may fail in lines
\reflines{mat}{last-enters}{notify} at most $N_r$ times, so the number of
additions to queue $R$ is bounded by ${|P| \cdot N_r}$. Together, these
observations clearly imply that the algorithm terminates.
\end{proof}

All operations used in our algorithm are linearisable and the algorithm
terminates, so the execution of $N$ threads on input $\efacts$ and $P$ has the
same effect as some finite sequence ${\Lambda = \langle \lambda_1, \dots,
\lambda_\ell \rangle}$ of operations where each $\lambda_i$ is
\begin{itemize}
    \item $\add{\tfacts}{F}$, representing successful addition of $F$ to
    $\tfacts$ in \refline{rule}{add}, \refline{const}{rewrite}, or line
    \ref{alg:fact:rewrite}, \ref{alg:fact:addinferred}, or
    \ref{alg:fact:ref-sameas-add} of \refalg{fact},
    
    \item ${F \defeq \nnext{\tfacts}}$, representing successful extraction of
    an unmarked, unprocessed fact $F$ from $\tfacts$ in \refline{fact}{extract},

    \item $\markout{\tfacts}{F}$, representing successful marking of $F$ as
    outdated in \refline{const}{rewrite} or \refline{fact}{rewrite},

    \item $\mergeInto{\rho}{d}{c}$, representing successful merging of resource
    $d$ into resource $c$ in \refline{fact}{mergeconst}, or
    
    \item $P' \defeq \rho(P)$, representing an update of program $P'$ in
    \refline{mat}{update-P}.
\end{itemize}
Materialisation first adds all facts in $\efacts$ to $\tfacts$, so the first
$m$ operations in $\Lambda$ are of the form $\add{\tfacts}{F_i}$ for each ${F_i
\in \efacts}$. By a slight abuse of notation, we often treat $\Lambda$ as a set
and write ${\lambda_i \in \Lambda}$. Our algorithm clearly ensures that each
operation $\add{\tfacts}{F}$ in $\Lambda$ is followed in $\Lambda$ by ${F
\defeq \nnext{\tfacts}}$ or $\markout{\tfacts}{F}$; if both of these operations
occur in $\Lambda$, then the former precedes the latter. Sequence $\Lambda$
induces a sequence of mappings of resources to representatives ${\rho_0,
\ldots, \rho_\ell}$: mapping $\rho_0$ is identity; for each $i > 0$ with
${\lambda_i = \mergeInto{\rho}{d}{c}}$, mapping $\rho_i$ is obtained from
$\rho_{i-1}$ by setting ${\rho_i(a) \defeq \rho_{i-1}(c)}$ for each resource
$a$ with ${\rho_{i-1}(a) = d}$; and for each $i > 0$ with ${\lambda_i \neq
\mergeInto{\rho}{d}{c}}$, let ${\rho_i \defeq \rho_{i-1}}$. Clearly, ${\rho =
\rho_\ell}$; furthermore, for each $i$ with ${1 \leq i \leq \ell}$, if
${\rho_i(F) \neq F}$, then ${\rho_j(F) \neq F}$ for each $j$ with ${i \leq j
\leq \ell}$.

\begin{claim}
    $\triple{a}{\sameAs}{b} \in \tfacts$ implies ${a = b}$.
\end{claim}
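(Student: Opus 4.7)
The plan is to argue by contradiction along the linearised sequence $\Lambda$ of atomic operations. Suppose $F = \triple{a}{\sameAs}{b}$ is in $\tfacts$ and unmarked in the final state, with $a \neq b$. Since $\add{\tfacts}{F}$ fails whenever $F$ is already present (marked or not) and marking is monotone, $F$ is added exactly once in $\Lambda$ and remains unmarked throughout the execution after its addition.

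First I would use the termination condition---the main loop in Algorithm~\ref{alg:mat} exits only when $\lnot\hasNext{\tfacts}$---to conclude that $F$ must be extracted by some $\nnext{\tfacts}$ operation in $\Lambda$. At that step $F$ is unmarked, so control in Algorithm~\ref{alg:fact} proceeds past the guard, and the thread computes $G = \rho(F)$. If $G \neq F$, the algorithm marks $F$ at line~\ref{alg:fact:rewrite}, contradicting the assumption. Hence $G = F$, so $\rho(a) = a$ and $\rho(b) = b$ at the moment $G$ is computed. Because $F$ is a sameAs triple with $a \neq b$, control enters the branch starting at line~\ref{alg:fact:sameas-start} and invokes $\mergeInto{\rho}{d}{c}$ with $c = \min\{a,b\}$ and $d = \max\{a,b\}$.

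Next I would show that $d$ enters the queue $C$. Either this $\mergeInto$ call returns $\true$, in which case the same thread enqueues $d$ at line~\ref{alg:fact:mergeconst}; or it returns $\false$, which by Algorithm~\ref{alg:merge} means $\msucc[d] \neq 0$ at the compare-and-set, so some earlier $\mergeInto{\rho}{d}{\cdot}$ in $\Lambda$ succeeded and its caller likewise enqueued $d$. Combining with the termination condition $\isEmpty{C}$ at the end, $d$ is eventually dequeued and processed by Algorithm~\ref{alg:const}. At that moment $F$ is present in $\tfacts$ (facts are never removed), contains $d$, and is still unmarked by the standing assumption. Hence the loop at line~\ref{alg:const:for} iterates over $F$, invokes $\markout{\tfacts}{F}$, and succeeds, contradicting that $F$ is unmarked in the final state.

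The main obstacle is the concurrency between the steps above: between the guard check $F \neq G$ and the subsequent merge, or between the enqueue of $d$ and its processing, another thread might change $\rho$ or merge $d$ independently. I would handle this by exploiting the linearisability of $\add$, $\markout$, $\mergeInto$, and $\nnext$, which justifies reasoning along $\Lambda$, together with the atomic semantics of $\mergeInto$: at most one $\mergeInto{\rho}{d}{\cdot}$ can return $\true$ for any fixed $d$, and its caller enqueues $d$. A smaller subtlety is that the iteration in Algorithm~\ref{alg:const} must actually observe $F$; since $F$ was successfully added strictly before its extraction, and extraction strictly precedes the processing of $d$ in $\Lambda$, the linearisable index-based scan at that step sees $F$.
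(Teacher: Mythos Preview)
Your proof is correct and follows essentially the same route as the paper's: assume an unmarked $\triple{a}{\sameAs}{b}$ with $a\neq b$ survives, use its extraction by $\nnext{\tfacts}$ to force either an immediate mark (when the computed $\rho(F)\neq F$) or a merge of $d=\max\{a,b\}$ that enqueues $d$ into $C$, and then obtain the contradiction from the subsequent processing of $d$ in \refalg{const}. You are in fact more careful than the paper on two points it leaves implicit---the sub-case where this thread's $\mergeInto{\rho}{d}{c}$ fails because another thread already merged $d$ (into a possibly different target), and the ordering argument ensuring $F$ is visible to the scan when $d$ is processed.
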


\begin{proof}
Assume that $\tfacts$ contains an unmarked fact $F$ of the form
$\triple{a}{\sameAs}{b}$ with ${a \neq b}$. Then, there exists an operation
${\lambda_i \in \Lambda}$ of the form ${F \defeq \nnext{\tfacts}}$. In case
${\rho_i(F) \neq F}$, then \reflines{fact}{if-outdated}{rewrite} ensure that
${\markout{\tfacts}{F} \in \Lambda}$, contradicting the assumption that $F$ was
unmarked. In case ${\rho_i(F) = F}$, then there exists an operation ${\lambda_j
\in \Lambda}$ with ${j \geq i}$ of the form $\mergeInto{\rho}{a}{b}$ or
$\mergeInto{\rho}{b}{a}$. Thus, either $a$ or $b$ is added to queue $C$ in
\refline{fact}{mergeconst}, and this resource is later processed in
\refalg{const}; then, due to \refline{const}{rewrite}, there exists an
operation ${\lambda_k \in \Lambda}$ with ${k \geq j}$ of the form
$\markout{\tfacts}{F}$. We obtain a contradiction in either case, as required.
\end{proof}

\begin{claim}
    ${F \in \tfacts}$ implies ${\rho(F) = F}$.
\end{claim}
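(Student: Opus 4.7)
The plan is to proceed by contradiction: suppose some fact $F \in \tfacts$ is unmarked at termination but $\rho(F) \neq F$. Then some resource $c$ occurring in $F$ satisfies $\rho(c) \neq c$, and since $\msucc$-entries are updated from $0$ to a non-zero value via CAS and never reverted, there is a unique operation $\lambda_m \in \Lambda$ of the form $\mergeInto{\rho}{c}{c'}$; immediately afterwards $c$ is placed into the queue $C$ in \refline{fact}{mergeconst}.

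I then argue that $c$ must be dequeued before termination. Inspection of \refalg{mat} shows that $\mathit{run}$ can be set to $\false$ only after the inner guard in \refline{mat}{more-work} fails for all $N$ threads, which in particular requires $\isEmpty{C}$; hence some later operation $\lambda_n \in \Lambda$ with $n > m$ dequeues $c$ and triggers the iteration in \refalg{const} over unmarked facts of $\tfacts$ containing $c$. Analogously, termination requires $\hasNext{\tfacts}$ to be $\false$, so the successful addition of $F$ at some step $\lambda_k$ is followed by an extraction step $\lambda_j = (F \defeq \nnext{\tfacts})$ with $j > k$.

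The argument now splits on the relative order of $k$ and $n$. If $k < n$, then $F$ is already present in $\tfacts$ when \refalg{const} processes $c$; since $F$ is unmarked by the standing assumption and contains $c$, linearisability of the lock-free index used by the iteration guarantees that $F$ is visited, so $\markout{\tfacts}{F}$ is invoked in \refline{const}{rewrite} and must succeed — contradicting that $F$ remains unmarked. If instead $k \geq n > m$, then $c$ was merged strictly before $F$ entered $\tfacts$, so at step $\lambda_j > k$ the traversal computing $G \defeq \rho(F)$ prior to \refline{fact}{if-outdated} starts with $\msucc[c] \neq 0$; by the "at least as current as $\alpha$" specification of \refalg{applyrho} described in Section~\ref{sec:reasoning:map}, the traversal from $c$ cannot return $c$, so $G \neq F$ and \refline{fact}{rewrite} marks $F$ — again a contradiction.

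The main obstacle is that the argument is fundamentally concurrent: both the sub-claim "the iteration of \refalg{const} visits every unmarked $F$ with $c \in F$ that is in $\tfacts$ throughout the iteration" and the sub-claim "if $\msucc[c] \neq 0$ when the traversal starts then \refalg{applyrho} does not return $c$" have to be derived from the linearisability specifications of $\tfacts$, $\rho$, and the queue $C$, rather than from a purely sequential reading of the pseudocode; once these two lemmas are stated carefully, the case split above closes the proof.
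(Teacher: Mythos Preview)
Your argument is correct and follows essentially the same strategy as the paper: assume an unmarked $F$ with $\rho(F)\neq F$, locate a merged resource occurring in $F$, and derive a contradiction by showing that either \refalg{fact} (at extraction) or \refalg{const} (when processing the merged resource) must have marked $F$. The only real difference is the pivot of the case split---the paper compares the merge time against the \emph{extraction} time $i$ of $F$ (splitting on whether $\rho_i(F)=F$) rather than your addition-time-vs-dequeue-time comparison---which is marginally cleaner since extraction and $\mathsf{mergeInto}$ are both operation types explicitly listed in $\Lambda$, whereas the dequeue step you call $\lambda_n$ is not.
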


\begin{proof}
Assume for the sake of contradiction that a fact ${F \in \tfacts}$ exists such
that ${F \neq \rho(F)}$. Then, there exists an operation ${\lambda_i \in
\Lambda}$ of the form ${F \defeq \nnext{\tfacts}}$. We clearly have $F =
\rho_i(F)$, or \reflines{fact}{if-outdated}{rewrite} would have ensured that
${\markout{\tfacts}{F} \in \Lambda}$. But then, there exists an operation
${\lambda_j \in \Lambda}$ with ${i < j \leq \ell}$ of the form
$\mergeInto{\rho}{d}{c}$ where resource $d$ occurs in $F$. Hence, $d$ is added
to queue $C$ in \refline{fact}{mergeconst}, and this resource is later
processed in \refalg{const}; but then, there exists an operation ${\lambda_k
\in \Lambda}$ with ${k \geq j}$ of the form $\markout{\tfacts}{F}$, which
contradicts our assumption that ${F \in \tfacts}$.
\end{proof}

\begin{claim}
    ${\tfacts^\rho \subseteq \Pi^\infty}$.
\end{claim}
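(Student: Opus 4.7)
The plan is to strengthen the claim into a joint invariant that I would prove by induction along the linearised sequence $\Lambda$. For every index $i$ let:

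\noindent\textbf{(I1)} for every resource $a$ ever occurring, $\triple{a}{\sameAs}{\rho_i(a)}\in\Pi^\infty$; and

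\noindent\textbf{(I2)} every fact that has been the argument of an $\add{\tfacts}{\cdot}$ operation up to step $i$ belongs to $\Pi^\infty$.

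Once \textbf{(I1)} and \textbf{(I2)} are shown to hold at $i=\ell$, the claim follows immediately: if $G\in\expand{\tfacts}{\rho}$ then $\rho(G)\in\tfacts\subseteq\Pi^\infty$ by \textbf{(I2)}, and \textbf{(I1)} combined with the congruence rules \eqref{eq:eq2}--\eqref{eq:eq4} (symmetry and transitivity being derivable from them) lets us step from $\rho(G)$ to $G$ inside $\Pi^\infty$ by rewriting each position in turn. The base case is easy: $\efacts\subseteq\Pi^\infty$ gives \textbf{(I2)} initially, and applying rule \eqref{eq:eq1} to any fact of $\efacts$ that mentions $a$ delivers the reflexive triple required by \textbf{(I1)} under $\rho_0=\mathit{id}$.

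The inductive step splits on the kind of operation $\lambda_i$. For $\mergeInto{\rho}{d}{c}$, the merge is triggered by some $F=\triple{a}{\sameAs}{b}\in\tfacts$ with $\{a,b\}=\{c,d\}$; invariant \textbf{(I2)} gives $F\in\Pi^\infty$, and derivable symmetry and transitivity, together with the previous instance of \textbf{(I1)}, allow us to re-route every old equality $\triple{r}{\sameAs}{d}$ through $c$ so that the new equality $\triple{r}{\sameAs}{\rho_i(c)}$ lies in $\Pi^\infty$. Invariant \textbf{(I2)} is untouched by such $\lambda_i$. For $\add{\tfacts}{F}$ I would distinguish the three ways the algorithm manufactures $F$: a reflexive $\triple{c}{\sameAs}{c}$ introduced in line \ref{alg:fact:ref-sameas-add} of \refalg{fact} is justified by rule \eqref{eq:eq1} applied to the extracted fact (in $\Pi^\infty$ by \textbf{(I2)}); a rewriting $F=\rho_i(F')$ produced in line \ref{alg:const:rewrite} of \refalg{const} or line \ref{alg:fact:rewrite} of \refalg{fact} follows from \textbf{(I2)} on $F'$ plus congruence, using the equality facts from \textbf{(I1)}; and a rule-head derivation $F=\head{r}\tau$ (lines \ref{alg:rule:add} and \ref{alg:fact:addinferred}) requires the more delicate argument described below.

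The main obstacle is precisely this last case. Here the applied rule is ${r=\rho_j(r_0)}$ with $r_0\in P$ for some earlier reindexing step $j$, so the body atoms matched in $\tfacts$ are instances of $\rho_j(\body{r_0})$, not of $\body{r_0}$. I would exploit that rewriting commutes with variable substitution on resources: each body fact $B_i(r)\tau=\rho_j(B_i(r_0))\tau\in\Pi^\infty$ (by \textbf{(I2)}) can be transformed, via \textbf{(I1)}-provided equalities $\triple{c}{\sameAs}{\rho_j(c)}$ for every resource $c$ of $r_0$ and the congruence rules, into $B_i(r_0)\tau\in\Pi^\infty$. An application of the original rule $r_0\in P\subseteq P\cup\Peq$ inside $\Pi^\infty$ then yields $\head{r_0}\tau\in\Pi^\infty$, and one last round of congruence rewrites this into $\rho_j(\head{r_0})\tau=\head{r}\tau=F$, which closes the induction. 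The remaining subtlety is that $j$ need not equal $i$, but this is harmless: the equality triples established by \textbf{(I1)} at step $j$ stay in $\Pi^\infty$ forever, and the further refinement of $\rho$ between $j$ and $i$ only affects the representative chain, not the membership needed for the congruence steps.
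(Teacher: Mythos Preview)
Your proposal is correct and follows essentially the same approach as the paper: the paper likewise proves the claim via the joint invariant \textbf{(I1)}/\textbf{(I2)} (there labeled (i) and (ii)) by induction along the linearised sequence $\Lambda$, with the same case split on the form of $\lambda_i$ and the same treatment of the rule-application case by pushing congruences through to recover an application of the original rule $r_0\in P$ inside $\Pi^\infty$. Your explicit observation that the rule actually applied is $\rho_j(r_0)$ for some earlier $j\leq i$, and that the equality facts from \textbf{(I1)} at step $j$ suffice, is in fact slightly more precise than the paper's phrasing, which writes $\rho_i(r)$ but relies on the same monotonicity.
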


\begin{proof}
The claim holds because $\Peq$ contains replacement rules
\eqref{eq:eq2}--\eqref{eq:eq4} and the following two properties are satisfied
for each ${1 \leq i \leq \ell}$:
\begin{enumerate}[(i)]
    \item for each resource $a$, we have ${\triple{a}{\sameAs}{\rho_i(a)} \in
    \Pi^\infty}$, and
    
    \item if $\lambda_i$ is of the form $\add{\tfacts}{F}$, then ${F \in
    \Pi^\infty}$.
\end{enumerate}
We prove (i) and (ii) by induction on $i$. For the induction base, we consider
the first $m$ operations in $\Lambda$ of the form $\add{\tfacts}{F_i}$ with
${F_i \in \efacts}$; both claims clearly hold for ${i = m}$. For the inductive
step, property (i) can be affected only if $\lambda_i$ is of the form
$\mergeInto{\rho}{d}{c}$, and property (ii) can be affected only if $\lambda_i$
is of the form $\add{\tfacts}{F}$; hence, we analyse these cases separately.

\medskip

Assume ${\lambda_i = \mergeInto{\rho}{d}{c}}$. To show that property (i) holds,
consider an arbitrary resource $a$; property (i) holds trivially for $a$ if
${\rho_{i-1}(a) \neq d}$, so assume that ${\rho_{i-1}(a) = d}$. Due to the form
of $\lambda_i$, constant $d$ was added to $C$ in \refline{fact}{mergeconst} due
to an operation ${\lambda_j \in \Lambda}$ with ${j < i}$ of the form ${F \defeq
\nnext{\tfacts}}$ with $F$ of the form $\triple{c}{\sameAs}{d}$ or
$\triple{d}{\sameAs}{c}$; but then, there exists an operation ${\lambda_k \in
\Lambda}$ with ${k < j}$ of the form $\add{\tfacts}{F}$; thus, by the induction
assumption, property (ii) implies ${F \in\Pi^\infty}$. Furthermore, by the
induction assumption and ${\rho_{i-1}(d) = d}$, we have
${\triple{a}{\sameAs}{d} \in \Pi^\infty}$, and we also have
${\triple{c}{\sameAs}{\rho_{i-1}(c)} \in \Pi^\infty}$. Moreover, ${\rho_i(a) =
\rho_{i-1}(c) = \rho_i(c)}$ holds by \refalg{merge}. Finally, property \sameAs
is reflexive, symmetric, and transitive in $\Pi^\infty$, so
${\triple{a}{\sameAs}{\rho_i(a)} \in \Pi^\infty}$ holds, as required for
property (i).

\medskip

Assume ${\lambda_i = \add{\tfacts}{F}}$ by \refline{const}{rewrite} or
\refline{fact}{rewrite}; thus, $F$ obtained from some fact $G$ for which there
exists an operation ${\lambda_j \in \Lambda}$ with $j < i$ of the form
$\add{\tfacts}{G}$. By the induction assumption, we have ${G \in \Pi^\infty}$.
Fact $G$ is obtained from $F$ by replacing each occurrence of a resource $c$ in
$F$ with $\rho_n(c)$ for some $n$ with ${1 \leq n \leq i}$; by the induction
assumption, mapping $\rho_n$ satisfies property (i), so we have
${\triple{c_}{\sameAs}{\rho_n(c)} \in \Pi^\infty}$. But then, replacement rules
\eqref{eq:eq2}--\eqref{eq:eq4} in $\Peq$ ensure that ${F \in \Pi^\infty}$.

\medskip

Assume ${\lambda_i = \add{\tfacts}{F}}$ by \refline{rule}{add} or
\refline{fact}{addinferred}; thus, $F$ is obtained by applying a rule
$\rho_i(r)$ of the form \eqref{eq:rule-form} via a substitution $\tau$ that
matches the body atoms ${B_1, \ldots, B_n}$ of $\rho_i(r)$ to facts ${F_1,
\ldots, F_n}$ such that, for each ${1 \leq j \leq n}$, sequence $\Lambda$
contains an operation of the form $\add{\tfacts}{F_j}$ preceding $\lambda_i$.
By the induction assumption, property (ii) implies ${\{ F_1, \ldots, F_n \}
\subseteq \Pi^\infty}$. We next show that rule $r$ can be applied to facts ${\{
F_1', \ldots, F_n' \} \subseteq \Pi^\infty}$ to derive a fact $F'$, and that
the rules in $\Peq$ can be used to derive $F$ from $F'$. Let ${C_r = \{ c_1,
\ldots, c_k \}}$ be the set of resources occurring in rule $r$; by the
induction assumption, property (i) ensures the following observation:
\begin{align}
    \triple{c_j}{\sameAs}{\rho_i(c_j)} \in \Pi^\infty \text{ for each } j \text{ with } 1 \leq j \leq k. \tag{$\lozenge$}
\end{align}
Let ${F_1', \ldots, F_n'}$ be the facts obtained from ${F_1, \ldots, F_n}$ by
replacing, for each ${c \in C_r}$, each occurrence of $\rho_i(c)$ with $c$; due
to ($\lozenge$), the rules in $\Peq$ ensure that ${\{F_1', \ldots, F_n' \}
\subseteq \Pi^\infty}$ holds. Now let $\sigma$ be the substitution obtained
from $\tau$ by replacing, for each ${c_j \in C_r}$, resource $\rho_i(c_j)$ in
the range of $\tau$ with $c_j$; then, substitution $\sigma$ matches all body
atoms of $r$ to derive fact ${F' \in \Pi^\infty}$ where ${\rho_i(F') = F}$. Due
to ($\lozenge$), the rules in $\Peq$ ensure ${F \in \Pi^\infty}$, as required
for property (ii).

\medskip

Assume ${\lambda_i = \add{\tfacts}{F}}$ by \refline{fact}{ref-sameas-add},
so ${F = \triple{c}{\sameAs}{c}}$ with $c$ a resource occurring in some fact
$G$ for which there exists an operation ${\lambda_j \in \Lambda}$ with ${j <
i}$ of the form $\add{\tfacts}{G}$. By the induction assumption, we have ${G
\in \Pi^\infty}$. But then, due to rules \eqref{eq:eq1} in $\Peq$, we have ${F
\in \Pi^\infty}$, as required.
\end{proof}

Before proving ${\tfacts^\rho \supseteq \Pi^\infty}$, we show a useful property
($\lozenge$) essentially saying that, whenever a fact $F$ is added to $\tfacts$
in operation $j$, at each step $i$ after $j$, a rewriting $G$ of $F$ is or will
be `visible' in $\tfacts$ (i.e., $G$ has not been marked outdated before
operation $i$).

\begin{claim}[$\lozenge$]
    For each $i$ with ${1 \leq i \leq \ell}$ and each operation ${\lambda_j \in
    \Lambda}$ with ${j \leq i}$ of the form $\add{\tfacts}{F}$, there exists
    $k$ such that
    \begin{enumerate}[(a)]
        \item ${\lambda_k = \add{\tfacts}{G}}$,
    
        \item $G$ is obtained from $F$ by replacing each occurrence of a resource
        $c$ with $\rho_n(c)$ for some $n$ with ${n \leq k}$, and
    
        \item for each $k'$ (if any) with ${k < k' \leq i}$, we have
        ${\lambda_{k'} \neq \markout{\tfacts}{G}}$.
    \end{enumerate}
\end{claim}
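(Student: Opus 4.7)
The plan is to prove claim~$(\lozenge)$ by induction on $i$, holding $j$ and the original fact $F$ fixed. The base case is $i = j$: take $G = F$, $k = j$, and $n = 0$; then $\rho_0$ is the identity, so (a)~and (b)~are immediate, and (c)~holds vacuously since the interval $(j, j]$ is empty.

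For the inductive step ($i > j$), invoke the induction hypothesis to obtain witnesses $k \le i-1$ and $G$ satisfying (a)--(c) up to step $i-1$. If $\lambda_i$ is any operation other than $\markout{\tfacts}{G}$, the same pair $(k, G)$ witnesses the claim at~$i$: conditions (a)~and (b)~are unchanged, and (c)~is preserved because no new markout of $G$ has been added to~$\Lambda$. The substantive case is $\lambda_i = \markout{\tfacts}{G}$. Such an operation can only be produced by a thread executing \refline{const}{rewrite} in Algorithm~\ref{alg:const} or \refline{fact}{rewrite} in Algorithm~\ref{alg:fact}; in both cases the same thread also invokes $\add{\tfacts}{H}$, where $H$ is the rewriting of $G$ under the value of $\rho$ visible to that thread at some step $m \le i$. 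From this I would produce a fresh witness $(k', G')$: either an earlier successful $\lambda_{k'} = \add{\tfacts}{H}$ with $k' \le i$ is present in~$\Lambda$ and $H$ is still unmarked at step~$i$, in which case set $G' = H$; or $H$ was itself marked out earlier, in which case apply the induction hypothesis to that earlier add of $H$ to obtain a deeper representative $G''$ that is unmarked at step~$i$. Transitivity of resource replacement ensures the resulting witness is obtained from $F$ by replacing each resource $c$ with $\rho_n(c)$ for an appropriate $n \le k'$, as~(b) requires.

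The step I expect to be the main obstacle is showing, in the second case above, that at step~$i$ some representative of $F$ already lives in~$\tfacts$: linearisability only forces the thread's subsequent $\add{\tfacts}{H}$ to appear after the markout in~$\Lambda$, so that add could be linearised at some $k^* > i$ with no intervening successful add of $H$ by another thread. The resolution would require an auxiliary invariant tying the state of~$\tfacts$ to the queue~$C$ and to~$\rho$, essentially asserting that whenever $G$ is about to be marked out in Algorithm~\ref{alg:const} or Algorithm~\ref{alg:fact}, a rewriting of $G$ is already present---for instance because the merge responsible for outdating $G$ installs the rewriting via a prior iteration of Algorithm~\ref{alg:const} before any markout of $G$ can be committed. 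Once this invariant is in place the cascading induction closes the step, and $(\lozenge)$ slots directly into the remaining proof of $\expand{\tfacts}{\rho} \supseteq \Pi^\infty$.
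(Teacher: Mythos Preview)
Your induction skeleton matches the paper's almost exactly: induct on $i$, keep the current witness $(k,G)$ unless $\lambda_i=\markout{\tfacts}{G}$, and in that case chase the rewriting $H$ produced by the same thread. The divergence is in how you handle the ``obstacle'' you flag at the end, and that divergence stems from a misreading of the claim.

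Nothing in $(\lozenge)$ requires $k\le i$. The statement only asks for \emph{some} $k$ in $\Lambda$ with $\lambda_k=\add{\tfacts}{G}$, and condition~(c) quantifies over $k'$ with $k<k'\le i$; if $k>i$ this range is empty and (c) holds vacuously. The paper exploits exactly this: when the markout of $G$ is $\lambda_{i+1}$ and the thread's subsequent $\add{\tfacts}{G'}$ succeeds at some $k'>i+1$, that $k'$ is taken as the new witness and (a)--(c) hold immediately. No auxiliary invariant is needed. The remaining subcase---$G'$ already present because $\lambda_m=\add{\tfacts}{G'}$ for some $m\le i$---is handled by applying the induction hypothesis at step $i$ to the pair $(m,G')$, which yields a further witness $G''$ unmarked through step $i$; transitivity of the rewriting relation gives~(b) for $F$.

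Two smaller points. First, you say you hold $j$ fixed, but then invoke the hypothesis on ``that earlier add of $H$'', which is a different $j$; the induction must range over all $j\le i$ simultaneously, as the paper does. Second, the auxiliary invariant you sketch---that a rewriting of $G$ is already present before $G$ is marked---is false as stated: in both \refline{const}{rewrite} and \refline{fact}{rewrite} the short-circuit evaluation marks $G$ \emph{before} attempting to add $\rho(G)$, so at the linearisation point of the markout the rewriting need not be in $\tfacts$. Dropping the implicit constraint $k\le i$ removes the need for any such invariant.
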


\begin{proof}
The proof proceeds by induction on $i$. The base case $i=0$ is vacuous, so we
assume that ($\lozenge$) holds up to some $i$ with ${1 \leq i < \ell}$, and we
consider operation $\lambda_{i+1}$ and an arbitrary operation $\lambda_j$ with
${j \leq i+1}$ of the form $\add{\tfacts}{F}$. If $j=i+1$, then the claim
trivially holds for ${k = j}$; otherwise, we have $j<i+1$, so by applying the
induction assumption to $i$, an integer $k$ with ${\lambda_k =
\add{\tfacts}{G}}$ satisfying properties (a)--(c). If ${\lambda_{i+1} \neq
\markout{\tfacts}{G}}$, then $k$ satisfies properties (a)--(c) for $i+1$ and
$\lambda_j$ as well. If, however, ${\lambda_{i+1} = \markout{\tfacts}{G}}$,
then either in \refline{const}{rewrite} or in \refline{fact}{rewrite} an
attempt will be made to add a fact $G'$ satisfying property (b) to $\tfacts$.
First, assume that $G'$ already exists in $\tfacts$---that is, some ${m \leq
i}$ exists such that ${\lambda_m = \add{\tfacts}{G'}}$; then, by applying the
induction assumption to $G'$, some $k'$ with ${\lambda_{k'} =
\add{\tfacts}{G''}}$ exists that satisfies properties (a)--(c); but then, $k'$
satisfies properties (a)--(c) for $F$ which proves the claim for $\lambda_j$.
In contrast, if no such $m$ exists, then the addition in
\refline{const}{rewrite} or \refline{fact}{rewrite} succeeds, and some $k'$
with ${i + 1 < k'}$ exists such that ${\lambda_{k'} = \add{\tfacts}{G'}}$.
Thus, properties (a)--(c) hold for $i+1$ and $\lambda_j$.
\end{proof}

\begin{claim}
    ${\tfacts^\rho \supseteq \Pi^\infty}$.
\end{claim}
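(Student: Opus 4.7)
The plan is to prove ${\tfacts^\rho \supseteq \Pi^\infty}$ by induction on the stage $i$, showing that for every ${F \in \Pi^i}$, the fact $\rho(F)$ is in the final $\tfacts$ and unmarked. The dual formulation $\rho(F) \in \tfacts$ is natural because $F \in \tfacts^\rho$ is defined exactly this way, and because of Claims 2 and 3 (which guarantee that representatives and unmarked triples behave well under $\rho$). Throughout, I would make repeated use of Claim ($\lozenge$) together with the fact that the \emph{final} mapping $\rho$ extends every intermediate $\rho_n$, in the sense that ${\rho = \rho \circ \rho_n}$ for each $n$; thus if $G$ is any rewriting of $F$ obtained by replacing resources via intermediate mappings, then ${\rho(G) = \rho(F)}$.

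For the base case ${F \in \efacts}$, operation $\add{\tfacts}{F}$ lies in $\Lambda$, and applying ($\lozenge$) at ${i = \ell}$ yields some $G$ in the final $\tfacts$ that is obtained from $F$ by replacing resources with intermediate representatives, and that has never been marked afterwards. Claim~3 then gives ${\rho(G) = G}$, and monotonicity of $\rho$ gives ${\rho(G) = \rho(F)}$, so ${\rho(F) \in \tfacts}$ unmarked. For the inductive step with ${F \in \Pi^i \setminus \Pi^{i-1}}$, the fact $F$ is obtained from facts ${B_j\sigma \in \Pi^{i-1}}$ via a rule $r$ in $P \cup \Peq$ and substitution $\sigma$; by the induction hypothesis each ${\rho(B_j\sigma) \in \tfacts}$ unmarked, so the task reduces to showing that the correct consequence $\rho(F)$ ends up in $\tfacts$.

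The cases for $r \in \Peq$ are short. For the replacement rules \eqref{eq:eq2}--\eqref{eq:eq4}, the body contains an \sameAs atom whose rewriting lies in $\tfacts$, so Claim~2 forces the two relevant resources to have the same $\rho$-image, whence ${\rho(F)}$ coincides with the rewriting of the other body atom, which is in $\tfacts$ by the induction hypothesis. For \eqref{eq:eq1}, when the non-\sameAs fact ${\rho(B\sigma)}$ is extracted by some $\nnext{\tfacts}$ in $\Lambda$, lines \ref{alg:fact:ref-sameas-start}--\ref{alg:fact:ref-sameas-end} add the reflexive \sameAs facts for each of its resources; the degenerate case where $B\sigma$ itself is already an \sameAs triple is handled by noting that then $\rho(F)$ coincides with $\rho(B\sigma)$. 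Rule \eqref{eq:eq5} is covered by line \ref{alg:fact:contradiction}.

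The main obstacle is the case ${r \in P}$, where one must argue that the rule application that would derive $\rho(F)$ was actually performed by the algorithm. The plan here is: at termination ${P' = \rho(P)}$, so ${\rho(r) \in P'}$; define $\tau$ by ${\tau(x) \defeq \rho(\sigma(x))}$, so ${\rho(\body{r})\tau = \rho(\body{r}\sigma)}$ and each such body fact lies in $\tfacts$ unmarked. Pick an operation ${\lambda_j \in \Lambda}$ that is the last of the $\nnext{\tfacts}$ operations extracting one of the $\rho(B_k\sigma)$; at that moment all other body facts are already present in $\tfacts$, possibly in a pre-rewriting form whose current rewriting matches via the annotated query $Q^{\preceq}/Q^{\prec}$, so by the semantics of $\tfacts.\evaluate$ in lines \ref{alg:fact:ruleindex-start}--\ref{alg:fact:ruleindex-end} some extending $\tau'$ of $\sigma$ triggers an $\add{\tfacts}{\head{r}\tau'}$. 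Applying ($\lozenge$) at ${i = \ell}$ to this addition gives a rewriting of $\head{r}\tau'$ unmarked in the final $\tfacts$, and monotonicity of $\rho$ collapses this rewriting to $\rho(F)$. A residual subtlety is that $P' = \rho(P)$ may still change after the last reevaluation of the triggering rule; this is handled by the termination condition, which requires ${R = \emptyset}$ and $P' = \rho(P')$ simultaneously, ensuring that any rule obtained after a further merge is placed on $R$ and re-evaluated against $\tfacts^{\preceq L}$ via Algorithm~\ref{alg:rule}, so the argument is simply iterated for the updated rule.
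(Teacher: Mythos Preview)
Your overall plan and the treatment of the base case and of rules \eqref{eq:eq1}--\eqref{eq:eq5} match the paper's proof and are correct; in particular, deriving property~(ii) from property~(i) via Claim~2 is exactly what the paper does. The genuine gap is in the case ${r \in P}$.

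You pick the last extraction $\lambda_j$ of some $\rho(B_k\sigma)$ and assert that at that instant lines \ref{alg:fact:ruleindex-start}--\ref{alg:fact:ruleindex-end} match the rule body to facts ``possibly in a pre-rewriting form'' and derive something that rewrites to $\rho(F)$. But the algorithm does not match rewritings of facts; $\tfacts.\evaluate$ matches unmarked facts literally against the body atoms of whatever rule is in $P'$ \emph{at that moment}. At instant $\lambda_j$ the current program $P'$ need not contain $\rho(r)$: it may contain only an earlier version $\rho_n(r)$ in which some constant $c$ has not yet been rewritten, and then the body atom of $\rho_n(r)$ containing $c$ simply fails to unify with the fully rewritten extracted fact $\rho(B_k\sigma)$. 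In that situation $P'.\rulesFor(\rho(B_k\sigma))$ returns nothing relevant and no derivation is triggered at $\lambda_j$. Your ``residual subtlety'' paragraph gestures at re-evaluation via Algorithm~\ref{alg:rule}, but you never establish that when $\rho(r)$ is eventually enqueued into $R$, all of the $\rho(B_k\sigma)$ already lie in $\tfacts^{\preceq L}$ for the $L$ recorded at that update---which is precisely what line~\ref{alg:rule:for} requires.

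The paper closes this gap by pivoting instead on the \emph{last addition} $\add{\tfacts}{G}$ among the $\rho(B_k\sigma)$ and performing an explicit two-way case split on whether the program update that first introduces $\rho(r)$ occurs after that addition or not. If after, then all body facts were added before the update and hence lie in $\tfacts^{\preceq L}$, so Algorithm~\ref{alg:rule} fires. If not (including the case $\rho(r) \in P$), then because $\rho(\rho(r)) = \rho(r)$ the rule $\rho(r)$ remains in every subsequent $P'$, so at the extraction of $G$ the rule $\rho(r)$ is available and Algorithm~\ref{alg:fact} fires with the annotated query, all other $\rho(B_k\sigma)$ being in $\tfacts^{\preceq G}$ by choice of $G$. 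This case split is the missing ingredient in your sketch.
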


\begin{proof}
The claim holds if ${\tfacts \supseteq \rho(\Pi^\infty)}$ and if
${\triple{c}{\sameAs}{d} \in \Pi^\infty}$ implies ${\rho(c) = \rho(d)}$. Thus,
we prove by induction on $i$ that each set $\Pi^i$ in the sequence ${\Pi^0,
\Pi^1, \ldots}$ satisfies the following two properties:
\begin{enumerate}[(i)]
    \item ${\tfacts \supseteq \rho(\Pi^i)}$ and

    \item ${\triple{c}{\sameAs}{d} \in \Pi^i}$ implies ${\rho(c) = \rho(d)}$.
\end{enumerate}
To prove these claims, we consider an arbitrary fact ${F \in \Pi^0}$ (for the
base case) or ${F \in \Pi^{i+1} \setminus \Pi^i}$ with ${i \geq 0}$ (for the
induction step) and show that ${\add{\tfacts}{\rho(F)} \in \Lambda}$. Since
$\rho$ is the final resource mapping, $F$ is never marked as outdated and so we
have ${\rho(F) \in \tfacts}$, as required for property (i). Moreover, if ${F =
\triple{c}{\sameAs}{d}}$, then ${\rho(F) \in \tfacts}$ together with property
(1) of Theorem \ref{thm:correctness} imply that $\rho(F)$ is of the form
$\triple{a}{\sameAs}{a}$, and so we have ${\rho(c) = a = \rho(d)}$, as required
for property (ii).

\medskip

\emph{Induction Base.}\; Consider an arbitrary fact ${F \in \Pi^0 = \efacts}$.
Let $G$ be the fact that satisfies (a)--(c) of property ($\lozenge$) for ${i =
\ell}$; fact $G$ is never marked as outdated due to (c), so ${\rho(F) = G}$
holds by property 2 of Theorem \ref{thm:correctness}. But then, property (a)
implies ${\add{\tfacts}{\rho(F)} \in \Lambda}$, as required.

\medskip

\emph{Induction Step.}\; Fact ${F \in \Pi^{i+1} \setminus \Pi^i}$ is derived
using a rule ${r \in P \cup \Peq}$ of the form \eqref{eq:rule-form} from facts
${\{ F_1, \ldots, F_n \} \subseteq \Pi^i}$. By the induction assumption we have
${\{ \rho(F_1), \ldots, \rho(F_n) \} \subseteq \tfacts}$, which implies
${\add{\tfacts}{\rho(F_j)} \in \Lambda}$ for each ${1 \leq j \leq n}$; we
denote the latter property with ($\dag$).

Assume that $F$ is derived by applying rule \eqref{eq:eq1} to a fact ${G \in
\Pi^i}$. By property ($\dag$), there exists an operation
${\add{\tfacts}{\rho(G)} \in \Lambda}$; thus, there exists an operation
${\rho(G) \defeq \nnext{\tfacts} \in \Lambda}$; finally,
\refline{fact}{ref-sameas-add} ensures ${\add{\tfacts}{\rho(F)} \in \Lambda}$.

Assume that $F$ is derived by applying rule \eqref{eq:eq2}, \eqref{eq:eq3}, or
\eqref{eq:eq4} to facts ${\{ G, \triple{c}{\sameAs}{d} \} \subseteq \Pi^i}$. By
the induction assumption, we have ${\rho(c) = \rho(d)}$. But then, since $G$ is
obtained from $F$ by replacing $c$ with $d$, we have ${\rho(G) = \rho(F)}$; by
property ($\dag$), we have ${\add{\tfacts}{\rho(G)} \in \Lambda}$, and so
${\add{\tfacts}{\rho(F)} \in \Lambda}$.

Assume that $F$ is derived by applying rule \eqref{eq:eq5} to a fact ${G \in
\Pi^i}$. By property ($\dag$), there exists an operation
${\add{\tfacts}{\rho(G)} \in \Lambda}$; thus, there exists an operation
${\rho(G) \defeq \nnext{\tfacts} \in \Lambda}$; finally,
\refline{fact}{contradiction} ensures ${\add{\tfacts}{\rho(F)} \in \Lambda}$.

Assume that $F$ is derived by applying a rule ${r \in P}$ of form
\eqref{eq:rule-form} to facts ${\{ F_1,\ldots, F_n \} \subseteq \Pi^i}$ via
some substitution $\sigma$. Let $\tau$ be the substitution where ${\tau(x) =
\rho(\sigma(x))}$ for each variable $x$ from the domain of $\sigma$. Rule
$\rho(r)$ derives $\rho(F)$ via $\tau$ from ${\rho(F_1), \ldots, \rho(F_n)}$.
To show that one can match these facts to an annotated query derived from
$\rho(r)$, let $G$ be the fact among ${\rho(F_1), \ldots, \rho(F_n)}$ for which
operation $\add{\tfacts}{G}$ occurs last in $\Lambda$, and let $j$ be the
smallest integer with ${1 \leq j \leq n}$ and ${\rho(F_j) = G}$. Rule $\rho(r)$
occurs in the final program $P'$, so we have two possibilities.
\begin{itemize}
    \item Assume that ${\rho(r) \not\in P}$ and that $\rho(r)$ occurs for the
    first time in $\Lambda$ in an operation ${P' \defeq \rho(P)}$ that appears
    in $\Lambda$ after operation $\add{\tfacts}{G}$. Then, by property
    ($\dag$), rule $\rho(r)$ is applied to facts ${\rho(F_1), \ldots,
    \rho(F_n)}$ in \refline{rule}{for}, and so ${\add{\tfacts}{\rho(F)} \in
    \Lambda}$ due to \refline{rule}{add}.
    
    \item In all other cases, $\Lambda$ contains an operation ${\rho(F_j)
    \defeq \nnext{\tfacts}}$, and at that point program $P'$ contains
    $\rho(r)$; hence, rule $\rho(r)$ is applied in
    \refline{fact}{ruleindex-start} by matching $\rho(B_j)$ to $G$. Now by
    ($\dag$) and the way in which we have selected $G$, we have ${\{ \rho(F_1),
    \ldots, \rho(F_n) \} \subseteq \tfacts^{\preceq G}}$, so each body atom
    $\rho(B_k)$ of $\rho(r)$ can be matched to $\tfacts^{\preceq G}$.
    Furthermore, $j$ is the smallest index such that ${\rho(B_j) = G}$, so
    ${\rho(F_k) \in \tfacts^{\prec G}}$ for each ${1 \leq k < j}$. Thus,
    substitution $\tau$ is returned in \refline{fact}{ruleindex-start}, and so
    \refline{fact}{addinferred} ensures that ${\add{\tfacts}{\rho(F)} \in
    \Lambda}$ holds, as required. \qedhere
\end{itemize}
\end{proof}

\begin{claim}
    Each pair of $r$ and $\tau$ is considered at most once either in
    \refline{rule}{for} or in \refline{fact}{ruleindex-start}.
\end{claim}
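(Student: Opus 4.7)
The plan is to decompose the claim into three parts: (i) uniqueness within \refline{fact}{ruleindex-start}, (ii) uniqueness within \refline{rule}{for}, and (iii) disjointness between the two. Part (i) is standard semina\"{\i}ve reasoning. Fix a rule $r$ and a substitution $\tau$ matching the body atoms $B_1,\dots,B_k$ of $r$ to facts $F_1,\dots,F_k$. Since extractions from $\tfacts$ via $\nnext{\tfacts}$ are linearised, the facts $F_1,\dots,F_k$ have a unique latest fact $F^*$ in extraction order. Let $i^* = \min\{i : F_i = F^*\}$. I would show that at any step extracting a fact $F \neq F^*$, the $\preceq F$ constraints on the annotated atoms following the matched atom cannot accommodate $F^*$; and at the step extracting $F^*$, only the index $i^*$ produces $\tau$, since for any $i > i^*$ the $\prec F^*$ annotation on $B_{i^*}$ would fail (as $F_{i^*} = F^*$), and for $i < i^*$ we have $F_i \neq F^*$.

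For (ii), I would first argue that each rule $r'$ is enqueued into $R$ at most once. The key invariant to establish is that once $r'$ leaves $P'$ it never re-enters: if $r' = \rho^{t}(r_0)$ with $r_0 \in P$ but $\rho^{t'}(r_0) \neq r'$ at some later $t' > t$, then some resource $a \in r_0$ has $\rho^{t}(a) = b \in r'$ and $\rho^{t'}(a) \neq b$. If some alternative $r_0'' \in P$ witnessed $r' \in \rho^{t'}(P)$, the resource $c \in r_0''$ at the position of $b$ would satisfy $\rho^{t'}(c) = b$, placing $b$ in $c$'s class at $t'$. Because $a$ was in $b$'s class at time $t$ and equivalence classes only grow monotonically with merges, $a$ would lie in $c$'s class at $t'$, forcing $\rho^{t'}(a) = b$ and yielding a contradiction. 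Combined with the guard $\rho(r) \notin P'$ at line \ref{alg:mat:update-rules}, this ensures $R$ never contains the same rule twice. Since $\tfacts.\evaluate$ returns a set, each $\tau$ for a dequeued $r$ is considered exactly once.

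For (iii), fix $(r,\tau)$ with last-matched fact $F^*$. If $r \in P$, then $r$ is never enqueued in $R$, so only \refline{fact}{ruleindex-start} applies. Otherwise, let $t$ be the unique update at which $r$ enters $P'$ (and simultaneously $R$), and let $L^{(t)}$ be the value of $L$ recorded in line \ref{alg:mat:note-last} at that update. I split into two cases. If $F^* \in \tfacts^{\preceq L^{(t)}}$, then at the extraction step of $F^*$ the program $P'$ did not yet contain $r$, so \refline{fact}{ruleindex-start} could not have considered $(r,\tau)$; moreover $F^*$ is not re-extracted (each fact is returned by $\nnext{\tfacts}$ at most once), so \refline{fact}{ruleindex-start} never considers this pair. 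Conversely, $(r,\tau)$ may be produced in \refline{rule}{for}. If $F^* \notin \tfacts^{\preceq L^{(t)}}$, then since $B_{i^*}\tau = F^*$ must match inside $\tfacts^{\preceq L^{(t)}}$ under the all-$\preceq L$ annotation, the evaluation in \refline{rule}{for} cannot return $\tau$, whereas \refline{fact}{ruleindex-start} may produce $(r,\tau)$ at the extraction of $F^*$ (with $r \in P'$ by then).

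The hard part will be the invariant in (ii): showing that a rewritten rule that has left $P'$ cannot re-enter, which requires a careful chase through how equivalence classes evolve and how representatives relate across updates. The remaining steps follow from structural bookkeeping on $P'$, $R$, and the semina\"{\i}ve annotations $\prec$ and $\preceq$.
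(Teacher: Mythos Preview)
Your proposal is correct and follows essentially the same strategy as the paper: the paper's two-case split (on whether the last-added matched fact $G$ precedes or follows the update of $P'$ that introduces $r$) corresponds exactly to your case split in (iii), with your (i) and (ii) handling the respective ``within'' sub-cases via the semina\"{\i}ve annotations and the single-enqueue property. The one substantive difference is that the paper simply asserts, citing line~\ref{alg:mat:update-rules}, that each rule is enqueued into $R$ at most once, whereas you actually argue the underlying invariant that a rewritten rule cannot re-enter $P'$ after leaving. Your argument for this is sound but can be shortened: if $r' \in \rho^{t'}(P)$ then idempotence gives $\rho^{t'}(r') = r'$, and since a resource once merged stays merged, $\rho^{s}(r') = r'$ for every $s \le t'$; combined with $\rho^{s} \circ \rho^{t} = \rho^{s}$ for $t \le s$, any earlier witness $r_0$ with $\rho^{t}(r_0) = r'$ satisfies $\rho^{s}(r_0) = r'$ for all $t \le s \le t'$, so $r'$ never actually leaves $P'$ between two appearances and the guard at line~\ref{alg:mat:update-rules} suffices. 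One minor point: you select $F^*$ by extraction order while the paper selects $G$ by addition order; in the append-only table used here these coincide, but it is worth making that identification explicit.
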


\begin{proof}
For the sake of contradiction, assume that a rule $r$ of the form
\eqref{eq:rule-form} and substitution $\tau$ exist that violate this claim. The
domain of $\tau$ contains all variables in $r$, so $\tau$ matches all body
atoms of $r$ to a unique set of facts ${F_1, \ldots, F_n}$. We next show that
the annotations in queries prevent the algorithm from considering the same $r$
and $\tau$ more than once. To this end, let ${G \in \{ F_1, \ldots, F_n \}}$ be
the fact for which operation $\add{\tfacts}{G}$ occurs last in $\Lambda$.

Assume that $\add{\tfacts}{G}$ occurs in $\Lambda$ before the operation ${P'
\defeq \rho(P)}$ in $\Lambda$ with ${r \in P'}$ but $r\notin P$. Since $P'$ is
updated in \refline{mat}{update-P} only when there are no facts to process,
operation ${G \defeq \nnext{\tfacts}}$ also occurs in $\Lambda$ before ${P'
\defeq \rho(P)}$; but then, $r$ cannot be applied to $G$ in
\refline{fact}{ruleindex-start}. Hence, the only possibility is that $r$ and
$\tau$ are considered twice is in \refline{rule}{for}; however,
\refline{mat}{update-rules} ensures that $r$ is enqueued into $R$ at most
once.

Assume that $\add{\tfacts}{G}$ occurs in $\Lambda$ after the operation ${P'
\defeq \rho(P) \in \Lambda}$ with ${r \in P'}$. \Refline{rule}{for} evaluates
the rules in $R$ only up to the last fact $L$ extracted from $\tfacts$ before
$P'$ is updated, and so $r$ cannot be matched in $G$ in \refline{rule}{for};
hence, the only possibility is that $r$ and $\tau$ are considered twice in
\refline{fact}{ruleindex-start}. To this end, assume that $F$ and $F'$ are (not
necessarily distinct) facts extracted in \refline{fact}{extract}, let $Q$ the
annotated query used to match body atom $B_i$ of $r$ to $F$, and let $Q'$ the
annotated query used to match body atom $B_j$ of $r$ to $F'$; thus, we have
${B_i\tau = F}$ and ${B_j\tau = F'}$. We consider the following two cases.
\begin{itemize}
    \item Assume ${F = F'}$; furthermore, w.l.o.g.\ assume that ${i \leq j}$.
    If ${i = j}$, we have a contradiction since operation ${F \defeq
    \nnext{\tfacts}}$ occurs in $\Lambda$ only once and query ${Q = Q'}$ is
    considered in \refline{fact}{ruleindex-start} only once. If ${i < j}$, we
    have a contradiction since ${\bowtie_i \, = \, <}$ holds in the annotated
    query $Q'$, so atom $B_i$ cannot be matched to fact $F$ in query $Q'$
    (i.e., we cannot have ${B_i\tau = F}$) due to ${F \not\in \tfacts^{< F'} =
    \tfacts^{< F}}$.

    \item Assume ${F \neq F'}$; furthermore, w.l.o.g.\ assume that operation
    ${\add{\tfacts}{F}}$ occurs in $\Lambda$ after operation
    ${\add{\tfacts}{F'}}$. But then, ${B_j\tau = F'}$ leads to a contradiction
    since atom $B_j$ cannot be matched to fact $F'$ in query $Q$ when fact $F$
    is extracted in \refline{fact}{extract}. \qedhere
\end{itemize}
\end{proof}
}{}

\end{document}